\definecolor{color1}{HTML}{3c1b64}
\definecolor{color1L}{HTML}{f8f5fc}
\definecolor{color2}{HTML}{FF6C00}
\definecolor{color2L}{HTML}{fff7f2}
\definecolor{color3}{HTML}{A0204C}
\definecolor{color3L}{HTML}{FDF1F5}
\definecolor{color0}{HTML}{949494}
\definecolor{color0M}{HTML}{DCDCDC}
\definecolor{color0L}{HTML}{F8F8F8}
\DeclareFixedFont{\ttb}{T1}{txtt}{bx}{n}{12} 
\DeclareFixedFont{\ttm}{T1}{txtt}{m}{n}{12}  
\DeclareMathOperator{\tr}{tr}
\DeclareMathOperator{\can}{\mathrm{CAN}}
\newcommand{\mathcolorbox}[1]{\fcolorbox{color2}{color2L}{$\displaystyle #1$}}
\newtheorem*{rep@theorem}{\rep@title}
\newcommand{\newreptheorem}[2]{%
\newenvironment{rep#1}[1]{%
\def\rep@title{#2 \ref{##1}}%
\begin{rep@theorem}}%
{\end{rep@theorem}}}
\newtheorem{prop}{Proposition}
\crefname{section}{Section}{Sections}
\Crefname{section}{Section}{Sections}
\crefname{algorithm}{Algorithm}{Algorithm}
\Crefname{algorithm}{Algorithm}{Algorithm}
\crefname{equation}{Eq.}{Eq.}
\Crefname{equation}{Eq.}{Eq.}
\crefname{figure}{Fig.}{Fig.}
\Crefname{figure}{Fig.}{Fig.}
\crefname{appendix}{Appendix}{Appendices}
\Crefname{appendix}{Appendix}{Appendices}
\crefname{prop}{Proposition}{Propositions}
\Crefname{prop}{Proposition}{Propositions}
\tikzset{
perceptron0/.style = {circle,draw=color0,line width=1pt,fill=color0L,minimum width=0.7cm},
perceptronS0/.style = {circle,draw=color0,line width=1pt, fill=color0L,minimum width=0.2cm},
perceptron1/.style = {circle,draw=color1,line width=1pt,fill=color1L,minimum width=0.7cm},
perceptronS1/.style = {circle,draw=color1,line width=1pt,fill=color1L,minimum width=0.2cm},
perceptron2/.style = {circle,draw=color2,line width=1pt,fill=color2L,minimum width=0.7cm},
perceptronS2/.style = {circle,draw=color2,line width=1pt,fill=color2L,minimum width=0.2cm},
line0/.style = {draw=white,line width=3pt},
lineD/.style = {line width=1pt},
line1/.style = {draw=color1,line width=1pt},
line2/.style = {draw=color2,line width=1pt},
line3/.style = {draw=color3,line width=1pt},
operator0/.style = {draw, text=black, draw=black,line width=1pt,fill=color0L,minimum width=1cm,minimum size=1.5em},
operator1/.style = {draw, text=black, draw=color1,line width=1pt, fill=color1L,minimum width=1cm,minimum size=1.5em},
operator2/.style = {draw, text=black, draw=color2,line width=1pt, fill=color2L,minimum width=1cm,minimum size=1.5em},
brace0/.style = {decorate,decoration={brace,amplitude=5pt},color0},
dot/.style = {draw,fill,shape=circle,minimum size=5pt,inner sep=0pt},
dotwhite/.style = {draw,fill=white,shape=circle,minimum size=5pt,inner sep=0pt},
cross/.style={path picture={ \draw[thick,black](path picture bounding box.north) -- (path picture bounding box.south) (path picture bounding box.west) -- (path picture bounding box.east);	}},
circlewc/.style={draw,circle,cross,minimum width=0.3 cm},
dcross/.style={path picture={ \draw[thick, black](path picture bounding box.north west) -- (path picture bounding box.south east) (path picture bounding box.south west) -- (path picture bounding box.north east);}},
halfcross/.style={path picture={ \draw[thick, black](path picture bounding box.south west) -- (path picture bounding box.north east);}},
meter/.append style={draw, fill=white, inner sep=10, rectangle, font=\vphantom{A}, minimum width=30, line width=.7,
path picture={\draw[black] ([shift={(.1,.3)}]path picture bounding box.south west) to[bend left=50] ([shift={(-.1,.3)}]path picture bounding box.south east);\draw[black,-latex] ([shift={(0,.1)}]path picture bounding box.south) -- ([shift={(.3,-.1)}]path picture bounding box.north);}},
networkcircle1/.style = {line1,circle,fill=white, text width=2mm, minimum height=0.7cm},
networkellipse1/.style = {line1, ellipse,fill=white,minimum height=1.5cm,minimum width=0.7cm},
networkellipseM/.style = {line1, ellipse,fill=white,minimum height=2cm,minimum width=0.7cm},
networkellipseX/.style = {line1, ellipse,fill=white,minimum height=3cm,minimum width=0.9cm},
vertex0/.style = {regular polygon,regular polygon sides=7,draw, fill=color0L,minimum width=1cm},
vertex1/.style = {regular polygon,regular polygon sides=7,draw=color1,line width=1pt,fill=color1L,minimum width=1cm},
vertex2/.style = {regular polygon,regular polygon sides=7,draw=color2,line width=1pt,fill=color2L,minimum width=1cm},
}
\newcommand{\oneoneone}{\raisebox{-1.4pt}{\tikz[yscale=0.6,xscale=0.4]{
\node(1) [circle,draw,inner sep=0pt,minimum size=4.5pt] at (-1,0) {};
\node(2) [circle,draw,inner sep=0pt,minimum size=4.5pt] at (0,0) {};
\node(3) [circle,draw,inner sep=0pt,minimum size=4.5pt] at (1,0) {};
\node(e) [circle,draw=white,inner sep=0pt,minimum size=0.5pt] at (1.4,-0.2) {};
\draw (1)--(2) -- (3);
}}}
\newcommand{\oneothreeone}{\raisebox{-1.4pt}{\tikz[yscale=0.9,xscale=0.6]{
			\node(1) [circle,draw,inner sep=0pt,minimum size=4.5pt] at (-1,0.15) {};
			\node(3) [circle,draw,inner sep=0pt,minimum size=4.5pt] at (0,-.08) {};
			\node(4) [circle,draw,inner sep=0pt,minimum size=4.5pt] at (0,0.15) {};		
			\node(5) [circle,draw,inner sep=0pt,minimum size=4.5pt] at (0,0.38) {};			
			\node(6) [circle,draw,inner sep=0pt,minimum size=4.5pt] at (1,0.15) {};
			\node(e) [circle,draw=white,inner sep=0pt,minimum size=0.5pt] at (1.2,0.3) {};
			\draw (1)--(3);
			\draw (1)--(4);
			\draw (1)--(5);
			\draw (6)--(3);
			\draw (6)--(4);
			\draw (6)--(5);
}}}
\pgfplotsset{
/pgfplots/bar shift auto/.style={
/pgf/bar shift={%
-0.5*(\numplotsofactualtype/2*\pgfplotbarwidth + ((\numplotsofactualtype/2)-1)*(#1)) +
(.5+round((\plotnumofactualtype+1)/2)-1)*\pgfplotbarwidth + (round((\plotnumofactualtype+1)/2)-1)*(#1)
},
},
}
\newcommand{\specificthanks}[1]{\@fnsymbol{#1}}
\begin{document}
\title{Dissipative quantum generative adversarial networks}
\author{Kerstin Beer \thanks{Institut f\"ur Theoretische Physik, Leibniz Universit\"at Hannover, Appelstraße 2, 30167 Hannover, Germany, kerstin.beer@itp.uni-hannover.de}
\and Gabriel M{\"u}ller \thanks{previously Institut f\"ur Theoretische Physik, presently Institut für Quantenoptik, Leibniz Universit\"at Hannover, Welfengarten 1, 30167 Hannover, Germany, g.mueller@iqo.uni-hannover.de}}

\maketitle

\textbf{Noisy intermediate-scale quantum (NISQ) devices build the first generation of quantum computers.
Quantum neural networks (QNNs) gained high interest as one of the few suitable quantum algorithms to run on these NISQ devices.
Most of the QNNs exploit supervised training algorithms with quantum states in form of pairs to learn their underlying relation.
However, only little attention has been given to unsupervised training algorithms despite interesting applications where the quantum data does not occur in pairs.
Here we propose an approach to unsupervised learning and reproducing characteristics of any given set of quantum states.
We build a generative adversarial model using two dissipative quantum neural networks (DQNNs), leading to the dissipative quantum generative adversarial network (DQGAN).
The generator DQNN aims to produce quantum states similar to the training data while the discriminator DQNN aims to distinguish the generator's output from the training data.
We find that training both parts in a competitive manner results in a well trained generative DQNN.
We see our contribution as a proof of concept for using DQGANs to learn and extend unlabeled training sets.}
\\\\

The last years brought forth the first generation of quantum computers, namely, noisy intermediate-scale quantum (NISQ) devices \cite{Preskill2018, Brooks2019, Arute2019}.
These devices are limited by high noise levels and a small number of qubits.
Additionally, the noise limits the quantum circuits to only short depth before the signal-to-noise ratio becomes too small.
Due to these limitations, today's quantum computers do not achieve fault-tolerant quantum computation \cite{Shor1996,Preskill1998}.
However, this is required to perform promising applications \cite{Nielsen2000} such as Shor's factoring algorithm \cite{Shor1994} or quantum simulation \cite{Lloyd1996}.

Quantum Neural Networks (QNNs) belong to the few quantum algorithms that can be executed on NISQ devices \cite{Cerezo2020}.
A QNN can be implemented as a hybrid quantum-classical algorithm \cite{Mitarai2018, Stokes2020, Schuld2019, Ostaszewski2019}.
It executes a short parameterised quantum circuit (PQC) \cite{Bu2021, Benedetti2019, Du2018} on a quantum computer while optimising its parameters classically.
This process saves important resources such that these QNNs can also work under the limitations of NISQ devices \cite{Beer2021a}.

Here, we focus on the dissipative QNN (DQNN) \cite{Beer2020} which features a unique set of qubits for each network layer.
It is universal for quantum computation and has achieved remarkable results for various applications \cite{Bondarenko2020, Beer2021}, including their application on actual NISQ devices \cite{Beer2021a}.
The DQNN propagates the input state's information through the network in a feed-forward manner.
This is realized through completely positive layer-to-layer transition maps that act on the qubits of two adjacent layers.
Each qubit is characterised by an individual quantum perceptron unitary, the fundamental building block of the DQNN.

Most of the current QNN applications are limited to supervised learning of labeled training states \cite{Cerezo2020}.
This includes their use for classification \cite{Schuld2019a, Schuld2020, Zhang2020}, denoising quantum data \cite{Bondarenko2020,Achache2020,Wan2017, Romero2017, Pepper2019}, learning unitary transformations \cite{Kiani2020,Geller2021, Verdon2018,Sedlak2019,Beer2020, Beer2021a} and learning graph-structured quantum data \cite{Verdon2018,Sedlak2019,Verdon2019,Cong2019,Arunachalam2017,Beer2021}.
In these problems, the QNNs are trained with respect to a special family of loss functions. These compare the network's output state for each input state to the respective target output state.
However, there are problems where these training pairs do not exist and a different approach is required for training. 

We introduce dissipative quantum generative adversarial networks (DQGANs) for unsupervised learning of an unlabeled training set.
The fundamental concept is adopted from the highly successful classical generative adversarial networks \cite{Goodfellow2014}. One DQGAN consists of two DQNNs, namely, the generator and the discriminator. The discriminator's aim is to distinguish real training states from fake states produced by the generator. The generator's aim is to produce fake states that the discriminator can not distinguish from the real ones. By alternately and adversarially training the networks the generator should learn the relevant features of the training set and produce states that extend the training set.

Our aim with the DQGAN is to extend a given unlabeled training set featuring any desired quantum states.
The current literature on generative adversarial learning with QNNs mainly aimed and achieved to reproduce labeled training states \cite{DallaireDemers2018,Niu2021}.
However, this neglects their major potential to produce states similar to a set of unlabeled training states without the need of any further information.
The latter usage enables a whole range of exciting possibilities.
For example, the DQGAN could be trained on a limited set of naturally produced quantum states and learn to produce similar quantum states for any imaginable further use in the future.

We realise the DQGAN in two ways, once by a simulation on a classical computer, another time by simulating a concrete quantum circuit implementation.
The first one allows a direct implementation of the quantum perceptron unitaries.
These can effectively be trained by the later introduced quantum backpropagation algorithm.
However, this implementation requires information that is not accessible during the execution on a NISQ device.
Therefore, we additionally provide a concrete quantum circuit implementation (DQGAN\textsubscript{Q}) which is trained by the gradient descent algorithm.

In contrast to some proposals of \emph{quantum generative adversarial networks} (QGANs) \cite{DallaireDemers2018,Lloyd2018a,Benedetti2019a,Chakrabarti2019,Hu2019,Zoufal2019, Zoufal2021,Niu2021,Stein2021, Huang2021} DQGAN is a fully quantum architecture and is trained with quantum data. In \cite{Zoufal2019, Zoufal2021}, for example, QGANs are defined as quantum-classical hybrid and include a quantum generator and a classical discriminator. Morover, the authors of \cite{Lloyd2018a} present the usage of quantum or classical data and two quantum processors in an adversarial learning setting. Furthermore, some of the QGAN proposals are trained with labelled data \cite{DallaireDemers2018}. In contrary the here proposed DQGAN can be trained with a set of unlabelled quantum states, to which only the discriminative model has access.  

In this work, we provide proof of concept for using DQGANs to learn and extend an unlabeled training set.
We train the DQGAN on two different toy models, a set of quantum states forming a line on the Bloch sphere and another set of clustered quantum states.
Here, we succeed on showing the vast potential of our DQGAN implementations.

Our paper is organized as follows: in \cref{section_QNN}, we describe the general concepts of our DQNN and follow up with the definitions of its two implementations. In \cref{section_QGAN}, we introduce the general concept of generative adversarial training with DQNNs, leading to the DQGAN and a concrete training algorithm. Conclusively, we present our results in \cref{section_results} and discuss their relevance to the field in \cref{section_discussion}.
\section{Quantum neural networks\label{section_QNN}}
Many attempts on builing a QNN, the quantum analogue of the popular classical neural network, have been made \cite{
Andrecut2002, 
Oliveira2008, 
Panella2011, 
Silva2016, 
Cao2017, 
Wan2017, 
Alvarez2017, 
Farhi2018, 
Killoran2019, 
Steinbrecher2019, 
Torrontegui2019, 
Sentis2019, 
Tacchino2020, 
Beer2020, 
Skolik2020, 
Zhang2020, 
Schuld2020, 
Sharma2020, 
Zhang2021 
}. In the following we describe the architecture of so-called \emph{dissipative quantum neural networks} (DQNNs) \cite{Beer2020, Beer2021, Beer2021a} as we will exploit this ansatz to form the DQGANs. We explain how their training algorithm can be simulated on a classical computer and how the DQNN can be implemented on a quantum computer \cite{Beer2021}.

\subsection{Dissipative quantum neural network\label{section_21}}
DQNNs are build of layers of qubits, which are connected via building blocks. Such a building block, named perceptron, is engineered as an arbitrary unitary operation.
We can express the propagation of a state $\rho^{\text{in}}$ trough the network as a composition of layer-to-layer transition maps, namely
\begin{equation}\label{eq:DQNN_rhoOut}
\rho^\text{out}=\mathcal{E}\left(\rho^{\text{in}}\right)= \mathcal{E}^{L+1}\left(\mathcal{E}^{L}\left(\dots \mathcal{E}^{2}\left(\mathcal{E}^{1}\left(\rho^{\text{in}}\right)\right)\dots\right)\right),
\end{equation}
where the transition maps are defined as
\begin{equation*}\label{eq:DQNN_E}
\mathcal{E}^{l}(X^{l-1}) \equiv \tr_{l-1}\big(\prod_{j=m_l}^{1} U^l_j (X^{l-1}\otimes \ket{0...0}_l\bra{ 0...0})\prod_{j=1}^{m_l} {U_j^l}^\dag\big),
\end{equation*}
where $U_j^l$ refers to the $j$th perceptron acting on layers $l-1$ and $l$, and $m_l$ is the total number of perceptrons connecting layers $l-1$ and $l$, see \cref{fig:DQNN_qnncircuitA}. These maps tensor the state of the current layer to the state of the next layer's qubits and apply the perceptron unitaries. Since the qubits from the first of the two layers are traced out additionally, these QNNs are called \emph{dissipative}.
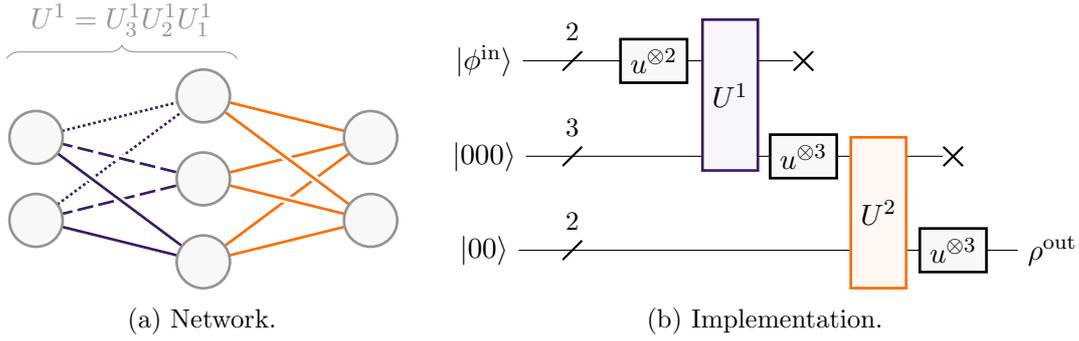
\begin{figure}
	\centering
	\begin{subfigure}[t]{0.35\linewidth}
		\centering
		\begin{tikzpicture}[scale=1.1]
			\begin{scope}[xshift=0.9cm,yshift=1.45cm]
				\draw[brace0] 
				(-1.25,0) -- node[above=1ex] {$U^1=U_3^1U_2^1U_1^1$}
				(1.5,0);   
			\end{scope}
			\draw[line0] (0,-.5) -- (2,1);
			\draw[line1,densely dotted] (0,-.5) -- (2,1);
			\draw[line0] (0,.5) -- (2,1);
			\draw[line1,densely dotted] (0,.5) -- (2,1);
			\draw[line0] (0,-.5) -- (2,0);
			\draw[line1, dash pattern=on 6pt off 2pt] (0,-.5) -- (2,0);
			\draw[line0] (0,.5) -- (2,0);
			\draw[line1, dash pattern=on 6pt off 2pt] (0,.5) -- (2,0);
			\draw[line0] (0,-.5) -- (2,-1);
			\draw[line1] (0,-.5) -- (2,-1);
			\draw[line0] (0,.5) -- (2,-1);
			\draw[line1] (0,.5) -- (2,-1);
			\foreach \x in {-1,0,1} {
				\draw[line0] (2,\x) -- (4,-0.5);
				\draw[line2] (2,\x) -- (4,-0.5);
				\draw[line0] (2,\x) -- (4,0.5);
				\draw[line2] (2,\x) -- (4,0.5);
			}
			\node[perceptron0] at (0,-0.5) {};
			\node[perceptron0] at (0,0.5) {};
			\node[perceptron0] at (2,-1) {};
			\node[perceptron0] at (2,0) {};
			\node[perceptron0] at (2,1) {};
			\node[perceptron0] at (4,-0.5) {};
			\node[perceptron0] at (4,0.5) {};
		\end{tikzpicture}
		\subcaption{Network. }
		\label{fig:DQNN_qnncircuitA}
	\end{subfigure}
	\begin{subfigure}[t]{0.64\linewidth}
		\centering
		\begin{tikzpicture}[scale=1.3]
			\matrix[row sep=0.3cm, column sep=0.4cm] (circuit) {
				\node(start3){$\ket{\phi^\text{in}}$};  
				& \node[halfcross,label={\small 2}] (c13){};
				& \node[operator0] (c23){$u^{\otimes 2}$};
				& \node[]{}; 
				& \node[dcross](end3){}; 
				& \node[]{}; 
				& \node[]{}; 
				& \node[]{}; \\
				\node(start2){$\ket{000}$};
				& \node[halfcross,label={\small 3}] (c12){};
				& \node[]{}; 
				& \node[]{}; 
				& \node[operator0] (c32){$u^{\otimes 3}$};
				& \node[]{}; 
				& \node[dcross](end2){}; 
				& \node[]{};  \\
				\node(start1){$\ket{00}$};
				& \node[halfcross,label={\small 2}] (c11){};
				& \node[]{}; 
				& \node[]{}; 
				& \node[]{}; 
				& \node[]{}; 
				& \node[operator0] (c41){$u^{\otimes 3}$};
				& \node (end1){$\rho ^\text{out}$}; \\
			};
			\begin{pgfonlayer}{background}
				\draw[] (start1) -- (end1)  
				(start2) -- (end2)
				(start3) -- (end3);
				\node[operator1, minimum height=2cm] at (-.35,0.5) {$U^1$};
				\node[operator2,minimum height=2cm] at (1.15,-.7) {$U^2$};
			\end{pgfonlayer}
		\end{tikzpicture}
		\subcaption{Implementation. }
		\label{fig:DQNN_qnncircuitB}
	\end{subfigure}
	\vspace*{10mm}
	\caption{\textbf{DQNN} An exemplary DQNN consisting of two layers of quantum perceptrons (a) can be implemented as quantum circuit (b). The $u$-gates represent layers of single qubit operations. $U^l=U_{m_l}^l \dots U_1^l$ denote the layer unitaries, where every unitary $U^l_k$ is expressed trough two-qubit unitaries, see \cite{Beer2020}.}
	\label{fig:DQNN_qnncircuit}
\end{figure}

The training of such an QNN architecture is done with respect to a data set containing $S$ input and desired output states, namely
$\{\ket{\phi^{\text{in}}_x}, \ket{\phi^{\text{SV}}_x}\} $.
For example, in \cite{Beer2020} it is shown that the DQNN algorithm can successfully characterize an unknown unitary $Y$, using desired output training states of the form $\ket{\phi^\text{SV}_x} = Y\ket{\phi^\text{in}_x}$.

Generally, the training is done via maximising a training loss function based on the fidelity $F$ of two states, e.g., of the form
\begin{equation}
\label{eq:DQNN_trainingloss}
\mathcal{L}_\text{SV}=\frac{1}{S}\sum_{x=1}^S F(\ket{\phi^{\text{SV}}_x}\bra{\phi^{\text{SV}}_x},\rho_x^{\text{out}}) = \frac{1}{S}\sum_{x=1}^S \braket{\phi^{\text{SV}}_x|\rho_x^{\text{out}}|\phi^{\text{SV}}_x}.
\end{equation}
The general aim is to optimise such a loss function by updating the variable parts of the DQNN.
In the following we explain the training process in two cases, the simulation on a classical computer and the quantum circuit implementation suitable for NISQ devices.

\subsection{Classical simulation implementation\label{section_QNN_cl}}
We can implement the algorithm using the quantum perceptron unitaries $U_j^l$ directly. Hence, every perceptron is described via $(2^{m_l+1})^2-1$ parameters.
Via feed-forward propagation of the input state trough the DQNN and back-propagation of the desired output state we can gain information on how every unitary $U_j^l$ has to be updated to minimize the training loss, exemplary defined in \cref{eq:DQNN_trainingloss}. We can formulate the update, using an update matrix $K_j^l(t)$, as
\begin{equation*}
\label{eq:DQNN_updateU}
U_j^l(t+\epsilon)=e^{i\epsilon K_j^l(t)} U_j^l(t),
\end{equation*}
where $\epsilon$ is the training step size and $t$ is the step parameter. The concrete formula of the update matrix is derived in \cite{Beer2020}. Remarkable is, that to evaluate the matrix $K_j^l(t)$, which updates a perceptron connecting layers $l-1$ and $l$, only two quantum states are needed: the output state of layer $l$ obtained by feed-forward propagation through the network, and the state of the layer $l+1$ obtained by back-propagation of the desired output. For more details of the classical simulation of a DQNN algorithm we point to \cite{Beer2020}. The code can be found at \cite{Github}.

\subsection{Quantum circuit implementation\label{section_QNN_q}}
To implement the quantum perceptrons on a quantum computer we have to abstract the perceptron unitaries into parameterised quantum gates. In \cite{Beer2021a} it is used that every arbitrary two-qubit unitary can be implemented with a two-qubit canonical gate and single qubit gates, see \cite{Zhang2003,Zhang2004,Blaauboer2008,Watts2013,Crooks2019, Peterson2020}. This yields the implementation of each perceptron via $m_{l-1}$ two-qubit unitaries connecting one qubit of the output layer $l$ to all qubits in the input layer $l-1$, respectively.

Rephrasing the sequence of single qubit gates in form of the gate $u$ and summarizing the two-qubit canonical gates in $U^l$ leads to the neat representation in \cref{fig:DQNN_qnncircuitB}. For the DQNN\textsubscript{Q}, $n = \sum_{l=1}^{L} m_l$ qubits are needed, where $L$ is the number of layers. The overall PQC consists of $3m + 3\sum_{l=1}^{L+1} m_{l-1}(1+m_l)$ parameters.

The DQNN\textsubscript{Q} implementation can be trained with gradient descent. At the beginning, the parameters of the quantum circuit are initialised as $\vec{\omega}_0$. All parameters are updated by $\vec{\omega}_{t+1} = \vec{\omega}_{t} + \vec{d \omega}_{t}$ in every training epoche, where $\vec{d\omega}_{t} = \eta {\nabla} \mathcal{L}_\text{SV} \left(\vec{\omega}_t\right)$ with the learning rate $\eta$ and the gradient is of the form
\begin{equation*}
\nabla _k \mathcal{L}_\text{SV} \left(\vec{\omega}_t\right) = \frac{\mathcal{L}_\text{SV}\left(\vec{\omega}_t + \vec\epsilon{e}_k\right) - \mathcal{L}_\text{SV}\left(\vec{\omega}_t - \vec\epsilon{e}_k\right)}{2\epsilon} + \mathcal{O}\left(\epsilon^2\right).
\end{equation*}
For a thorough explanation of the DQNN\textsubscript{Q} suitable for the execution on NISQ devices we refer to \cref{section:dqnn_q_implementation_details} and \cite{Beer2021a}.
\section{Dissipative quantum adversarial neural networks\label{section_QGAN}}
In the field of machine learning we can generally distinguish \emph{discriminative} and \emph{generative} models. For instance, classification problems such as classifying handwritten digits \cite{Nielsen2015} are common discriminative tasks. On the contrary, generative  models produce data. Speaking in the example of handwritten digits, we would train a generative model to produce different \enquote{handwritten} digits from random input.

In the following, we describe \emph{generative adversarial networks} (GANs). These are built of two models, where one of them has a generative and the other one a discriminative task. It is much harder to train generative models than discriminative models. The proposal of GANs offered new possibilities and has since found a lot of applications\cite{Creswell2018}, ranging from classification or regression tasks \cite{Creswell2018, Zhu2016, Salimans2016} to the generation \cite{Reed2016} and improvement \cite{Ledig2017} of images. 

\subsection{General concept\label{section_31}}
GANs were first introduced in \cite{Goodfellow2014}. The generative and discriminative parts of their GAN are implemented as a multi-layer perceptron, respectively. One the one hand, the generative model gets random noise samples as input and produces synthetic samples. On the other hand, the discriminator has access to both the generator's output and samples from the training data. In the original proposal, this data cannot be accessed by the generator.

The training aim of the discriminator is to distinguish correctly between the training data and the synthetic data. Since the generator's goal is to trick the discriminative model, the problem is called a \emph{minimax problem}. 
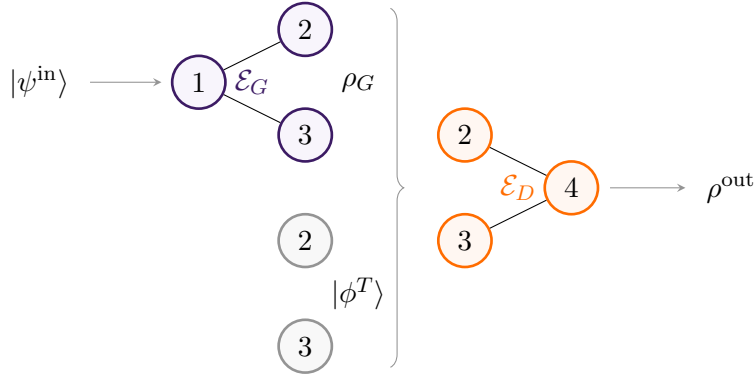
\begin{figure}[h!]
\centering
\begin{tikzpicture}[scale=1.4]
\node[] (pz) at (-1.5,0) {$\ket{\psi^\text{in}}$};
\node[perceptron1] (a) at (0,0) {1};
\node[perceptron1] (b) at (1,-.5) {3};
\node[perceptron1] (c) at (1,.5) {2};
\draw (a) -- (b);
\draw (a) -- (c);
\node[color1] (Ug) at (.5,0) {$\mathcal{E}_G$};
\node[] (Rg) at (1.5,0) {$\rho_G$};
\draw[-stealth,shorten <=4pt, shorten >=4pt,color0] (pz) -- (-.25,0);
\draw[brace0](1.8,0.7)-- (1.8,-2.7) ;
\begin{scope}[shift={(0,-2)}]
	\node[] (Rt) at (1.5,0) {$\ket{\phi^T}$};
	\node[perceptron0] (b) at (1,-.5) {3};
	\node[perceptron0] (c) at (1,.5) {2};
\end{scope}
\begin{scope}[shift={(2.5,-1)}]
	\node[] (pd) at (2.5,0) {$\rho^\text{out}$};
	\node[perceptron2] (d) at (0,-.5) {3};
	\node[perceptron2] (e) at (0,.5) {2};
	\node[perceptron2] (f) at (1,0) {4};
	\draw (d) -- (f);
	\draw (e) -- (f);
	\node[color2] (Ud) at (.5,0) {$\mathcal{E}_D$};
	\draw[-stealth,shorten <=4pt, shorten >=4pt,color0] (f) -- (pd);
\end{scope}
\end{tikzpicture}
\caption{\textbf{DQGAN.} The depicted DQGAN consists of four qubits. Qubits $2$ and $3$ are shared by the generative and the discriminative QNN. The state of this qubits is either the generator's output state $\rho_G$ on the input state, i.e., $\rho_G=\mathcal{E}_G (\ket{\psi^\text{in}}\bra{\psi^\text{in}})$ or a given training state $\ket{\phi^T}$. }
\label{fig:QGAN_qgan}
\end{figure}

Following the above-described ansatz, the DQGAN is constructed of two DQNNs, the generative model, and the discriminative model, described through the completely positive maps $\mathcal{E}_G$ and $\mathcal{E}_D$, respectively. The number of qubits in the generator's last layer equals the number of qubits in the discriminator's first layer. Hence, the generator's output can be used as input for the discriminator.

For the training a set of training states $\{\ket{\phi_x^T}\}_{x=1}^N$ and a set of random states $\{\ket{\psi ^\text{in}_x}\}$ is used. We assume the the states of both sets to be pure. The overall goal is to adversarially train both DQNNs, so that the generator produces states with characteristics similar to the training data.
We can describe the output of the discriminator DQNN as
\begin{singlespace}
\begin{equation*}
\rho^\text{out}=
\begin{cases} 
	\mathcal{E}_D (\mathcal{E}_G (\ket{\psi^\text{in}}\bra{\psi^\text{in}})) &\mbox{for generated data}\\
	\mathcal{E}_D (\ket{\phi^T}\bra{\phi^T}) & \mbox{for training data.} 
\end{cases}
\end{equation*}
\end{singlespace}

To be more precise we shortly discuss the DQGAN depicted in \cref{fig:QGAN_qgan} and consisting of four qubits. Please consider \cref{fig:DQNN_qnncircuit} for a better understanding of the following description. The generative model consists of two two-qubit unitaries $U_{G1}$ and $U_{G2}$, acting on qubits $1$ and $2$, and qubits $1$ and $3$, respectively. The discriminator is described by a single three-qubit unitary $U_D$. If the discriminative model gets a training data state $\ket{\phi^T}$ as input the resulting discriminator output state can be described as
\begin{align*}
\rho_{\mathrm{out}}^{D}
=&\tr_{\{2,3\}}\Big(U_D \left( \ket{\phi^T}\bra{\phi^T} \otimes \ket{0}\bra{0} \right) U_D^\dagger \Big).
\end{align*}
For the the generator's output as input, the discriminator has the output
\begin{align*}
\rho_{\mathrm{out}}^{G+D}
=&\tr_{\{1,2,3\}}\Big(U_D  U_{G2} U_{G1} ( \ket{\psi_x^\text{in}}\bra{\psi_x^\text{in}} \otimes \ket{000}\bra{000}) U_{G1}^\dagger U_{G2}^\dagger U_D^\dagger \Big).
\end{align*}
The general form of these output states is used in the proof of \cref{prop:QGAN_K}. 

The original DQNN approach focuses on characterising a relation between input and output data pairs. However, we try to characterise a data set of single quantum states instead. We aim to train a generative model in a way that it is able to produce quantum states with similar properties compared to the training data set. Such extended quantum data sets can be, for example, useful for experiments or training other QNN architectures. 

\subsection{Training algorithm\label{section_32}}
In analogy to the classical case described in \cite{Goodfellow2014} we can describe the training process through
\begin{equation}\label{eq:dqgan_minimax}
\mathcolorbox{\min_G 
\max_D \left(\frac{1}{S}\sum_{x=1}^S \bra{0} \mathcal{E}_D (\mathcal{E}_G (\ket{\psi_x^\text{in}}\bra{\psi_x^\text{in}}))\ket{0} + \frac{1}{S}\sum_{x=1}^S \bra{1} \mathcal{E}_D (\ket{\phi_x^T}\bra{\phi_x^T})\ket{1} \right).
}
\end{equation}
The updates of the discriminator and the generator take place alternately. For updating the generator we maximise the loss function
\begin{equation*}
\mathcal{L}_{D}(\mathcal{E}_D,\mathcal{E}_G)=\frac{1}{S}\sum_{x=1}^S \bra{0} \mathcal{E}_D (\mathcal{E}_G (\ket{\psi_x^\text{in}}\bra{\psi_x^\text{in}}))\ket{0} + \frac{1}{S}\sum_{x=1}^S \bra{1} \mathcal{E}_D (\ket{\phi_x^T}\bra{\phi_x^T})\ket{1}
\end{equation*}
for $r_D$ rounds, whereas the generator is trained through maximising
\begin{equation*}
\mathcal{L}_{G}(\mathcal{E}_D,\mathcal{E}_G)=\frac{1}{S}\sum_{x=1}^S \bra{1} \mathcal{E}_D (\mathcal{E}_G (\ket{\psi_x^\text{in}}\bra{\psi_x^\text{in}}))\ket{1}.
\end{equation*}
assume for $r_G$ rounds. Note that $\mathcal{L}_G$ differs from the corresponding term in \cref{eq:dqgan_minimax} in that the fidelity is calculated with respect to $\ket{1}$ instead of $\ket{0}$. Therefore, the generator is trained by maximising $\mathcal{L}_G$ rather than minimising. These procedures are repeated for $r_T$ epochs. The overall training algorithm is described in \cref{alg:QGAN_algorithmQ}.

\begin{algorithm}[H]
\caption{Training of the DQGAN.}
\label{alg:QGAN_algorithmQ}
\begin{algorithmic}
\State initialize network unitaries 
\For{$r_T$ epochs}
\State make a list of $S$ randomly chosen states of the training data list $\{\ket{\phi_x^T}\}_{x=1}^N$
\For{$r_D$ epochs}
\State make a list of $S$ random states $\ket{\psi _x^\text{in}}$
\State update the discriminator unitaries by maximizing $\mathcal{L}_{D}$
\EndFor
\For{$r_G$ epochs}
\State make a list of $S$ random states $\ket{\psi_x^\text{in}}$
\State update the generator unitaries by maximising $\mathcal{L}_{G}$
\EndFor
\EndFor
\State make a list of $V$ random states $\ket{\psi_x^\text{in}}$
\State propagate each $\ket{\psi_x^\text{in}}$ through the generator to produce $V$ new states
\State calculate $\mathcal{L}_{V}$
\end{algorithmic}
\end{algorithm}

\begin{repprop}{prop:QGAN_K}
The update matrix for a QGAN trained with pure states $\ket{\phi^T_x}$  has to be of the form
\begin{equation*}
K^l_j(t) = \frac{\eta 2^{m_{l-1}}i}{S}\sum_x\tr_\text{rest}\big(M^l_{j}(x,t)\big),
\end{equation*}
where 
\begin{align*}
M_j^l =& \Big[ U_{j}^{l} \dots U_{1}^{1} \ ( \ket{\psi_x^\text{in}}\bra{\psi_x^\text{in}} \otimes \ket{0...0}\bra{0...0})  U_{1}^{1 \dagger} \dots U_{j}^{l \dagger}, \\
&U_{j+1}^{l\dagger}\dots U_{m_{L+1}}^{L+1 \dagger} \left(\mathbbm{1}_\mathrm{in+hid}\otimes \ket{1}\bra{1}\right)U_{m_{L+1}}^{L+1 } \dots U_{l+1}^{l}\Big]
\end{align*}
for $l\le g$ and 
\begin{align*}
M_j^l =& \Big[  U_{j}^{l} \dots U_{1}^{g+1} \ket{\phi^T}\bra{\phi^T} \otimes \ket{0...0}\bra{0...0} U_{1}^{g+1 \dagger}\dots U_{j}^{l \dagger}   \\
&- U_{j}^{l} \dots U_{1}^{g+1} U_{m_g}^{g} \dots U_{1}^{1} \ ( \ket{\psi_x^\text{in}}\bra{\psi_x^\text{in}} \otimes \ket{0...0}\bra{0...0})   U_{1}^{1 \dagger} \dots U_{m_{g}}^{g\dagger}  U_{1}^{g+1 \dagger}\dots U_{j}^{l \dagger} ,\\
&U_{j+1}^{l\dagger}\dots U_{m_{L+1}}^{L+1 \dagger} \left(\mathbbm{1}_\mathrm{in+hid}\otimes \ket{1}\bra{1}\right)U_{m_{L+1}}^{L+1 } \dots U_{l+1}^{l}\Big]
\end{align*}
else. Here, $U_j^l$ is assigned to the $j$th perceptron acting on layers $l-1$and $l$, $g$ is the number of perceptron layers of the generator, and $\eta$ is the learning rate.
\end{repprop}
The proof can be found in \cref{section_derivation}. Note, that in the following only DQGANs of three layers are used, i.e.\ both DQNNs are built of two qubit layers connected by one perceptron layer, respectively. Hence, we assume $g=1$ in the following.

In analogy to training the DQNN\textsubscript{Q}, the implementation on a quantum computer is done via parameterised quantum gates, which are updated using gradient descent. The training losses $\mathcal{L}_G$ and $\mathcal{L}_D$ are evaluated via measurement of the discriminator's output qubit.

At the end of the training the goal is that every generator output is close to at least one of the given states $\{\ket{\phi_x^T}\}_{x=1}^N$. To test this we additionally generate $V$ random states $\ket{\psi^\text{in}}$ as input states of the generator. We refer to the corresponding generated states as validation states. For each validation state, we search for the closest state of the data set via $\max_{x=1}^N \left( \bra{\phi_x^T} \mathcal{E}_G (\ket{\psi_i^\text{in}}\bra{\psi_i^\text{in}})\ket{\phi_x^T}\right)$. Using all validation states we define the \emph{validation loss}
\begin{equation*}
\mathcal{L}_{V}(\mathcal{E}_G)=\frac{1}{V}\sum_{i=1}^V \max_{x=1}^N \left( \bra{\phi_x^T} \mathcal{E}_G (\ket{\psi_i^\text{in}}\bra{\psi_i^\text{in}})\ket{\phi_x^T}\right).
\end{equation*}

Note that the above-defined validation loss would be optimised indeed if the generator produces only a small variety of states or even exactly one state. As long as these are close to at least one of the training states, the validation loss is high. Therefore, it is important to check the diversity of the generator's output, which will be described in \cref{section_discussion}.

\section{Results\label{section_results}}

In the following we test the training algorithm including the two training functions $\mathcal{L}_{G}$ and $\mathcal{L}_{D}$. Here, we use the simulation on a classical computer. The code can be found at \cite{Github}. As the training data we prepare a set of pure one-qubit states which build a line on the Bloch sphere, namely
\begin{equation*}
\text{data}_\text{line}=\left\{\frac{(N-x)\ket{0}+(x-1)\ket{1}}{||(N-x)\ket{0}+(x-1)\ket{1}||}\right\}_{x=1}^{N},
\end{equation*}
for $N=50$.
Next, we randomly shuffle this set of states. The first $S$ of the resulting set $\{\ket{\psi^T_x}\}_{x=1}^{S}$ will be used for the training process. The full data set $\{\ket{\psi^T_x}\}_{x=1}^{N}$ is used for computing the validation loss. 

In \cref{fig:GAN_line} the evolution of the discriminator's and generator's training losses and the validation loss is plotted. The latter reaches values over $0.95$ at $t=9.5$, i.e., after training round $r_T=475$. Moreover, we can observe that in the first training epochs, the training loss of the generator shrinks and the discriminator training loss increases. This behaviour inverts at $t\approx2$. For the remaining training process, this switch between an increasing generator training loss and an increasing discriminator training loss happens repetitively. We explain this behaviour with the opposing goals of the generator and the discriminator and a changing dominance of one of the networks. 

\afterpage{%
\thispagestyle{empty}
\begin{figure}[H]
\centering
\begin{subfigure}{\textwidth}\centering
\begin{tikzpicture}
\begin{axis}[
xmin=0,   xmax=20,
ymin=0.2,   ymax=1.5,
width=0.8\linewidth, 
height=0.5\linewidth,
grid=major,grid style={color0M},
xlabel= Training epochs $r_T$, 
xticklabels={-100,0,100,200,300,400,500,600,700,800,900,1000},
ylabel=$\mathcal{L}(t)$,legend pos=north east,legend cell align={left},legend style={draw=none,legend image code/.code={\filldraw[##1] (-.5ex,-.5ex) rectangle (0.5ex,0.5ex);}}]
\coordinate (0,0) ;
\addplot[mark size=1.5 pt,  color=color2] table [x=step times epsilon, y=costFunctionDis, col sep=comma] {numerics/QGAN_50data10sv_100statData_100statData_1-1networkGen_1-1networkDis_lda1_ep0i01_rounds1000_roundsGen1_roundsDis1_line_plot1_training.csv};
\addlegendentry[scale=1]{Training loss $\mathcal{L}_\text{D}$} 
\addplot[mark size=1.5 pt,  color=color1] table [x=step times epsilon, y=costFunctionGen, col sep=comma] {numerics/QGAN_50data10sv_100statData_100statData_1-1networkGen_1-1networkDis_lda1_ep0i01_rounds1000_roundsGen1_roundsDis1_line_plot1_training.csv};
\addlegendentry[scale=1]{Training loss $\mathcal{L}_\text{G}$} 
\addplot[mark size=1.5 pt,  color=color3] table [x=step times epsilon, y=costFunctionTest, col sep=comma] {numerics/QGAN_50data10sv_100statData_100statData_1-1networkGen_1-1networkDis_lda1_ep0i01_rounds1000_roundsGen1_roundsDis1_line_plot1_training.csv};
\addlegendentry[scale=1]{Validation loss $\mathcal{L}_\text{V}$} 
\draw [line width=0.5mm,dashed] (60,0) -- (60,200);
\node at (70,10) {(b)};
\draw [line width=0.5mm,dashed] (100,0) -- (100,200);
\node at (110,10) {(c)};
\draw [line width=0.5mm,dashed] (160,0) -- (160,200);
\node at (170,10) {(d)};
\end{axis}
\end{tikzpicture}
\subcaption{Loss functions.}
\label{fig:GAN_line}
\end{subfigure}
\begin{subfigure}{\textwidth}\centering
\begin{tikzpicture}[scale=1]
\begin{axis}[
ybar,
bar width=1.5pt,
xmin=0,   xmax=50,
ymin=0,   ymax=12,
width=.8\linewidth, 
height=.28\linewidth,
grid=major,
grid style={color0M},
xlabel= State index $x$, 
ylabel=Counts,legend pos=north east,legend cell align={left},legend style={draw=none,legend image code/.code={\filldraw[##1] (-.5ex,-.5ex) rectangle (0.5ex,0.5ex);}}]
\addplot[color=color2, fill=color2] table [x=indexDataTest, y=countOutTest, col sep=comma] {numerics/QGAN_50data10sv_100statData_100statData_1-1networkGen_1-1networkDis_lda1_ep0i01_rounds300_roundsGen1_roundsDis1_line_plot1_statisticsUSV.csv};
\addlegendentry[scale=1]{Validation states} 
\addplot[color=color1,fill=color1] table [x=indexDataTrain, y=countOutTrain, col sep=comma] {numerics/QGAN_50data10sv_100statData_100statData_1-1networkGen_1-1networkDis_lda1_ep0i01_rounds300_roundsGen1_roundsDis1_line_plot1_statisticsSV.csv};
\addlegendentry[scale=1]{Training states} 
\end{axis}
\end{tikzpicture}
\subcaption{Diversity of the generator's output ater $r_T=300$ training epochs.} \label{fig:GAN_line300}
\end{subfigure}
\begin{subfigure}{\textwidth}\centering
\begin{tikzpicture}[scale=1]
\begin{axis}[
ybar,
bar width=1.5pt,
xmin=0,   xmax=50,
ymin=0,   ymax=25,
width=.8\linewidth, 
height=.28\linewidth,
grid=major,
grid style={color0M},
xlabel= State index $x$, 
ylabel=Counts,legend pos=north east,legend cell align={left},legend style={draw=none,legend image code/.code={\filldraw[##1] (-.5ex,-.5ex) rectangle (0.5ex,0.5ex);}}]
\addplot[color=color2, fill=color2] table [x=indexDataTest, y=countOutTest, col sep=comma] {numerics/QGAN_50data10sv_100statData_100statData_1-1networkGen_1-1networkDis_lda1_ep0i01_rounds500_roundsGen1_roundsDis1_line_plot1_statisticsUSV.csv};
\addlegendentry[scale=1]{Validation states} 
\addplot[color=color1,fill=color1] table [x=indexDataTrain, y=countOutTrain, col sep=comma] {numerics/QGAN_50data10sv_100statData_100statData_1-1networkGen_1-1networkDis_lda1_ep0i01_rounds500_roundsGen1_roundsDis1_line_plot1_statisticsSV.csv};
\addlegendentry[scale=1]{Training states} 
\end{axis}
\end{tikzpicture}
\subcaption{Diversity of the generator's output ater $r_T=500$ training epochs.} \label{fig:GAN_line500}
\end{subfigure}
\begin{subfigure}{\textwidth}\centering
\begin{tikzpicture}[scale=1]
\begin{axis}[
ybar,
bar width=1.5pt,
xmin=0,   xmax=50,
ymin=0,   ymax=110,
width=.8\linewidth, 
height=.28\linewidth,
grid=major,
grid style={color0M},
xlabel= State index $x$, 
ylabel=Counts,legend pos=north east,legend cell align={left},legend style={draw=none,legend image code/.code={\filldraw[##1] (-.5ex,-.5ex) rectangle (0.5ex,0.5ex);}}]
\addplot[color=color2, fill=color2] table [x=indexDataTest, y=countOutTest, col sep=comma] {numerics/QGAN_50data10sv_100statData_100statData_1-1networkGen_1-1networkDis_lda1_ep0i01_rounds800_roundsGen1_roundsDis1_line_plot1_statisticsUSV.csv};
\addlegendentry[scale=1]{Validation states} 
\addplot[color=color1,fill=color1] table [x=indexDataTrain, y=countOutTrain, col sep=comma] {numerics/QGAN_50data10sv_100statData_100statData_1-1networkGen_1-1networkDis_lda1_ep0i01_rounds800_roundsGen1_roundsDis1_line_plot1_statisticsSV.csv};
\addlegendentry[scale=1]{Training states} 
\end{axis}
\end{tikzpicture}
\subcaption{Diversity of the generator's output ater $r_T=800$ training epochs.} \label{fig:GAN_line800}
\end{subfigure}
\caption{\textbf{Training a DQGAN.} (a) depicts the evolution of the loss functions during the training of a \protect\oneoneone DQGAN in $r_T=1000$ epochs with $\eta=1$ and $\epsilon=0.01$ using $50$ data pairs where $10$ are used as training states. The dashed lines mark the diversity checks 300 (b), 500 (c) and 800 (d) for the generator's output.}
\end{figure}}

%
The saturating validation loss in \cref{fig:GAN_line} gives the impression that the longer we train the DQGAN, the better the results. In the original proposal of the DQNN \cite{Beer2020} this was the case. However, the validation loss would be maximal if the generator would permanently produce the exact same state when this state is one of the training states $\{\ket{\psi^T_x}\}_{x=1}^{N}$. This would not fit our aim to train the generator to extended the training set. Hence, we explain in the following how we check the \emph{diversity} of the generator's output. 

After training for $r_T$ rounds, we use the generator to produce a set of $100$ states. Using the fidelity, we find for each of these states the element of $\text{data}_\text{line}$, which is the closest. In this way, we obtain a number for every index $x$ of this set describing how often the output of the generator was most closely to the $x$th element of $\text{data}_\text{line}$. In \cref{fig:GAN_line300,fig:GAN_line500,fig:GAN_line800} we plot these numbers in the form of an histogram. The different colours describe whether an element of $\text{data}_\text{line}$ was used as a training state or not. We find that the diversity was good after $300$ training epochs. However, it decreases afterwards in the ongoing training. We point to \cref{apnx:numerics} for more numerical results.

\begin{figure}[H]
	\centering
	\begin{subfigure}{\textwidth}\centering
		\begin{tikzpicture}[scale=1]
			\begin{axis}[
				ybar,
				bar width=1.5pt,
				xmin=0,   xmax=50,
				ymin=0,   ymax=20,
				width=.8\linewidth, 
				height=.28\linewidth,
				grid=major,
				grid style={color0M},
				xlabel= State index $x$, 
				ylabel=Counts,legend pos=north east,legend cell align={left},legend style={draw=none,legend image code/.code={\filldraw[##1] (-.5ex,-.5ex) rectangle (0.5ex,0.5ex);}}]
				\addplot[color=color2, fill=color2] table [x=indexDataTest, y=countOutTest, col sep=comma] {numerics/dqnn_q_eq_line_v2_epoch_100_vs.csv};
				\addlegendentry[scale=1]{Validation states} 
				\addplot[color=color1,fill=color1] table [x=indexDataTrain, y=countOutTrain, col sep=comma] {numerics/dqnn_q_eq_line_v2_epoch_100_ts.csv};
				\addlegendentry[scale=1]{Training states} 
			\end{axis}
		\end{tikzpicture}
		\subcaption{Diversity of the generator's output ater $r_T=100$ training epochs.} \label{fig:dqnn_q_eq_line_a}
	\end{subfigure}
	\begin{subfigure}{\textwidth}\centering
		\begin{tikzpicture}[scale=1]
			\begin{axis}[
				ybar,
				bar width=1.5pt,
				xmin=0,   xmax=50,
				ymin=0,   ymax=20,
				width=.8\linewidth, 
				height=.28\linewidth,
				grid=major,
				grid style={color0M},
				xlabel= State index $x$, 
				ylabel=Counts,legend pos=north east,legend cell align={left},legend style={draw=none,legend image code/.code={\filldraw[##1] (-.5ex,-.5ex) rectangle (0.5ex,0.5ex);}}]
				\addplot[color=color2, fill=color2] table [x=indexDataTest, y=countOutTest, col sep=comma] {numerics/dqnn_q_eq_line_v2_epoch_440_vs.csv};
				\addlegendentry[scale=1]{Validation states} 
				\addplot[color=color1,fill=color1] table [x=indexDataTrain, y=countOutTrain, col sep=comma] {numerics/dqnn_q_eq_line_v2_epoch_440_ts.csv};
				\addlegendentry[scale=1]{Training states} 
			\end{axis}
		\end{tikzpicture}
		\subcaption{Diversity of the generator's output ater $r_T=440$ training epochs.} \label{fig:dqnn_q_eq_line_b}
	\end{subfigure}
	\caption{\textbf{Training a DQGAN\textsubscript{Q}.} The training set features $S=10$ equally spaced quantum states from $\text{data}_\text{line}$. The remaining states from $\text{data}_\text{line}$ are used as validation states. The DQGAN\textsubscript{Q} features a 1-1$^+$ generator DQNN\textsubscript{Q} and a 1-1$^+$ discriminator DQNN\textsubscript{Q}, and employs the hyper-parameters $r_D=4$, $\eta_D=0.5$, $r_G=1$ and $\eta_G=0.1$.}
	\label{fig:dqnn_q_eq_line}
\end{figure}

In addition to the DQGAN simulation on a classical computer we also simulate the DQGAN\textsubscript{Q} under noiseless circumstances. Here, the same training loss functions $\mathcal{L}_G, \mathcal{L}_D$ are used as well as the same training data, $\text{data}_\text{line}$. In this case, the training states are not picked randomly but $S=10$ equally spaced training states are chosen from $\text{data}_\text{line}$. The hyper-parameters of the training are chosen such that for each of the $r_T$ epochs, a 1-1$^+$ discriminator DQNN\textsubscript{Q} is trained $r_D=4$ times with a learning rate $\eta_D=0.5$ and a 1-1$^+$ generator is trained $r_G=1$ times with a learning rate $\eta_G=0.1$. The $+$ denotes a slightly different implementation of DQGAN\textsubscript{Q} compared to implementation discussed in \cite{Beer2021a}. It uses additional gates and is explained in \cref{alg:QGAN_algorithmQ}.

The results of training the DQGAN\textsubscript{Q} are shown in are shown in \cref{fig:dqnn_q_eq_line}. The generator's diversity after training for $r_T=100$ epochs is depicted in \cref{fig:dqnn_q_eq_line_a}. Here, the generator achieves to produce states in a little more than half of the training data range. After $r_T=440$ training epochs, the generator's diversity is improved to two-thirds of the training data range as depicted in \cref{fig:dqnn_q_eq_line_b}. Please note that in both cases, the majority of the generator's produced states is closer to a validation state than a training state. This can be seen as a training success as the generator does not only learn to reproduce the training states but instead learns to extend the given training data.

For more numerical results using DQGAN\textsubscript{Q} we point to \cref{apnx:numerics} and \cite{Mueller2021}.
\section{Discussion\label{section_discussion}}
In this work, we introduced DQGANs, generative adversarial models based on the DQNN proposed in \cite{Beer2020}. A DQGAN features two DQNNs trained in an opposing manner: the discriminator's goal is to distinguish between synthetic, by the generator produced quantum states and elements of the training data set. On the contrary, the generator aims to produce data with similar properties as the states included in the training data set.

We aimed to extend a given data set with states with similar characteristics. Our examples have shown that this goal can be reached when training a DQGAN.
However, due to limitations in computational power, we could only train small DQGAN architectures and therefore leave questions open for future research. It would be interesting if using larger DQNNs for the generator or the discriminator leads to better validation loss values or more diversity in the generator's output. 
Further, the study of other data sets is of interest. One example could be a set of states with similar degrees of entanglement (with respect to a chosen entanglement measure) \cite{Schatzki2021}.
Since in classical machine learning, the output of GANs is often used to train other neural network architectures, a similar application for DQGANs and DQNNs is conceivable. 

\paragraph{Acknowledgements}
The authors would like to thank Tobias J. Osborne and Christian Struckmann for valuable discussions. Moreover, helpful correspondence with Dmytro Bondarenko, Terry Farrelly, Polina Feldmann, Daniel List, Jan Hendrik Pfau, Robert Salzmann, Daniel Scheiermann, Viktoria Schmiesing, Marvin Schwiering, and Ramona Wolf is gratefully acknowledged. This work was supported, in part, by the DFG through SFB 1227 (DQ-mat), Quantum Valley Lower Saxony, and funded by the Deutsche Forschungsgemeinschaft (DFG, German Research Foundation) under Germanys Excellence Strategy EXC-2123 QuantumFrontiers 390837967.
\newpage
\bibliographystyle{nosort_habbrv} 
\bibliography{literatur}
\newpage
\appendix
\section{Derivation of the update matrices}\label{section_derivation}

Analogously to the DQNN update rule presented in \cite{Beer2020} the unitaries will be updated through
\begin{equation*}
	U_j^l(t+\epsilon)=e^{i\epsilon K_{j}^l(t)} U_j^l(t).
\end{equation*}
We will derive the update matrices in general in \cref{prop:QGAN_K}. To understand the basic idea we discuss the update of a DQGAN consisting of three unitaries, see \cref{{fig:QGAN_qgan}}, first. These perceptron unitaries have the following update rules:
\begin{align*}
	U_D(t+\epsilon)&=e^{i\epsilon K_{D}(t)} U_D(t)\\ 
	U_{G1}(t+\epsilon)&=e^{i\epsilon K_{G1}(t)} U_{G1}(t)\\  
	U_{G2}(t+\epsilon)&=e^{i\epsilon K_{G2}(t)} U_{G2}(t).  
\end{align*}
Note  that the unitaries act on the current layers, e.g. is $U_{G1}$ denotes $U_{G1}\otimes \mathbbm{1}$ and $U_{G2}$ denotes $\mathbbm{1} \otimes U_{G2}$. 

In the first part of the algorithm the generator is fixed and only the discriminator is updated. When the training data is the discriminator's input we get the output state 
\begin{align*}
	\rho_{\mathrm{out}}^{D}(t+\epsilon)
	=&\tr_{\{2,3\}}\Big(e^{i\epsilon K_{D}}U_D \ \ket{\phi^T}\bra{\phi^T} \otimes \ket{0}\bra{0} \ U_D^\dagger e^{-i\epsilon K_{D}}\Big)\\
	=&\tr_{\{2,3\}}\Big(U_D\ \ket{\phi^T}\bra{\phi^T} \otimes \ket{0}\bra{0}\ U_D^\dagger +i\epsilon\ \left[K_{D}, U_D\ \ket{\phi^T}\bra{\phi^T} \otimes \ket{0}\bra{0}U_D^\dagger\right]  \\
	&+\mathcal{O}(\epsilon^2)\Big)\\	
	=&\rho_{\mathrm{out}}^{D}(t)+i\epsilon\ \tr_{\{2,3\}}\Big(\left[K_{D}, U_D\ \ket{\phi^T}\bra{\phi^T} \otimes \ket{0}\bra{0} \ U_D^\dagger\right]  \Big)+\mathcal{O}(\epsilon^2).
\end{align*}
If the discriminator gets the generator's output as input we get the output state
\begin{align*}
	\rho_{\mathrm{out}}^{G+D}(t+\epsilon)
	=&\tr_{\{1,2,3\}}\Big(e^{i\epsilon K_{D}}U_D  U_{G2} U_{G1} ( \ket{\psi_x^\text{in}}\bra{\psi_x^\text{in}} \otimes \ket{000}\bra{000}) U_{G1}^\dagger U_{G2}^\dagger U_D^\dagger e^{-i\epsilon K_{D}}\Big)\\
	=&\tr_{\{1,2,3\}}\Big(U_D  U_{G2} U_{G1} ( \ket{\psi_x^\text{in}}\bra{\psi_x^\text{in}} \otimes \ket{000}\bra{000}) U_{G1}^\dagger U_{G2}^\dagger  U_D^\dagger \\
	& +i\epsilon\ \left[K_{D}, U_D  U_{G2} U_{G1} ( \ket{\psi_x^\text{in}}\bra{\psi_x^\text{in}} \otimes \ket{000}\bra{000}) U_{G1}^\dagger U_{G2}^\dagger  U_D^\dagger\right] +\mathcal{O}(\epsilon^2)\Big)\\	
	=&\rho_{\mathrm{out}}^{G+D}(t)\\
	&+i\epsilon\ \tr_{\{1,2,3\}}\Big(\left[K_{D}, U_D U_{G2} U_{G1} ( \ket{\psi_x^\text{in}}\bra{\psi_x^\text{in}} \otimes \ket{000}\bra{000}) U_{G1}^\dagger U_{G2}^\dagger  U_D^\dagger\right]  \Big)\\
	& +\mathcal{O}(\epsilon^2).
\end{align*}
The update of the generator, assuming the discriminator is fixed, can be written as
\begin{align*}
	\rho_{\mathrm{out 2}}^{G+D}(t+\epsilon)
	=&\tr_{\{1,2,3\}}\Big(U_D e^{i\epsilon K_{G2}}U_{G2} e^{i\epsilon K_{G1}}U_{G1} ( \ket{\psi_x^\text{in}}\bra{\psi_x^\text{in}} \otimes \ket{000}\bra{000})\\
	& U_{G1}^\dagger e^{-i\epsilon K_{G1}} U_{G2}^\dagger e^{-i\epsilon K_{G2}}  U_D^\dagger \Big)\\
	=&\tr_{\{1,2,3\}}\Big(U_D \Big(   U_{G2} U_{G1} ( \ket{\psi_x^\text{in}}\bra{\psi_x^\text{in}} \otimes \ket{000}\bra{000}) U_{G1}^\dagger U_{G2}^\dagger	\\
	&+ i\epsilon \ U_{G2} \left[K_{G1}, U_{G1} ( \ket{\psi_x^\text{in}}\bra{\psi_x^\text{in}} \otimes \ket{000}\bra{000}) U_{G1}^\dagger  \right] U_{G2}^\dagger \\
	&+ i\epsilon \left[K_{G2}, U_{G2} U_{G1} ( \ket{\psi_x^\text{in}}\bra{\psi_x^\text{in}} \otimes \ket{000}\bra{000}) U_{G1}^\dagger U_{G2}^\dagger  \right]
	\Big) U_D^\dagger +\mathcal{O}(\epsilon^2) \Big) \\		
	=&\rho_{\mathrm{out 2}}^{G+D}(t)\\
	&+i\epsilon\ \tr_{\{1,2,3\}}\Big(U_D \Big(  
	U_{G2} \left[K_{G1}, U_{G1} ( \ket{\psi_x^\text{in}}\bra{\psi_x^\text{in}} \otimes \ket{000}\bra{000}) U_{G1}^\dagger  \right] U_{G2}^\dagger \\
	&+ \left[K_{G2}, U_{G2} U_{G1} ( \ket{\psi_x^\text{in}}\bra{\psi_x^\text{in}} \otimes \ket{000}\bra{000}) U_{G1}^\dagger U_{G2}^\dagger  \right]
	\Big) U_D^\dagger  \Big)+\mathcal{O}(\epsilon^2).
\end{align*}

To derive the update matrices in general we assume in the following a generator consisting of unitaries $U_1^1 \dots U_{m_g}^g$ and a discriminator built of unitaries $U_1^{g+1}\dots U_{m_{L+1}}^{L+1}$. The update matrices $K_j^l$ update the generator, if $l\le g$ for the number of perceptron layers $g$ of the generator. Otherwise, the matrices $K_j^l$ describe discriminator updates.

\begin{prop}
	\label{prop:QGAN_K}
	The update matrix for a QGAN trained with pure states $\ket{\phi^\text{T}_x}$  has to be of the form
	\begin{equation*}
		K^l_j(t) = \frac{\eta 2^{m_{l-1}}i}{S}\sum_x\tr_\text{rest}\big(M^l_{j}(x,t)\big),
	\end{equation*}
	where 
	\begin{align*}
		M_j^l =& \Big[ U_{j}^{l} \dots U_{1}^{1} \ ( \ket{\psi_x^\text{in}}\bra{\psi_x^\text{in}} \otimes \ket{0...0}\bra{0...0})  U_{1}^{1 \dagger} \dots U_{j}^{l \dagger}, \\
		&U_{j+1}^{l\dagger}\dots U_{m_{L+1}}^{L+1 \dagger} \left(\mathbbm{1}_\mathrm{in+hid}\otimes \ket{1}\bra{1}\right)U_{m_{L+1}}^{L+1 } \dots U_{l+1}^{l}\Big]
	\end{align*}
	for $l\le g$ and 
	\begin{align*}
		M_j^l =& \Big[  U_{j}^{l} \dots U_{1}^{g+1} \left(\ket{\phi^T}\bra{\phi^T} \otimes \ket{0...0}\bra{0...0}\right) U_{1}^{g+1 \dagger}\dots U_{j}^{l \dagger}   \\
		&- U_{j}^{l} \dots U_{1}^{g+1} U_{m_g}^{g} \dots U_{1}^{1} \ ( \ket{\psi_x^\text{in}}\bra{\psi_x^\text{in}} \otimes \ket{0...0}\bra{0...0})   U_{1}^{1 \dagger} \dots U_{m_{g}}^{g\dagger}  U_{1}^{g+1 \dagger}\dots U_{j}^{l \dagger} ,\\
		&U_{j+1}^{l\dagger}\dots U_{m_{L+1}}^{L+1 \dagger} \left(\mathbbm{1}_\mathrm{in+hid}\otimes \ket{1}\bra{1}\right)U_{m_{L+1}}^{L+1 } \dots U_{l+1}^{l}\Big]
	\end{align*}
	Here, $U_j^l$ is assigned to the $j$th perceptron acting on layers $l-1$and $l$, $g$ is the number of perceptron layers of the generator, and $\eta$ is the learning rate.
\end{prop}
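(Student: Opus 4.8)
The plan is to follow the update-rule derivation of the original DQNN in \cite{Beer2020}, adapting it to the two loss functions $\mathcal{L}_G$ and $\mathcal{L}_D$ of the adversarial setting. The key starting observation is that both losses are sums of fidelity terms of the uniform type $\langle a|\rho^\text{out}_x|a\rangle=\tr\big((\mathbbm{1}_\mathrm{rest}\otimes\ket{a}\bra{a})\,U\cdots\sigma_x\cdots U^\dagger\big)$, where $\sigma_x$ is the (untraced) input state — either $\ket{\psi_x^\text{in}}\bra{\psi_x^\text{in}}\otimes\ket{0\dots0}\bra{0\dots0}$ propagated through generator and discriminator, or $\ket{\phi_x^T}\bra{\phi_x^T}\otimes\ket{0\dots0}\bra{0\dots0}$ propagated through the discriminator alone — and $\ket{a}$ is either $\ket{0}$ or $\ket{1}$. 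Pulling the final partial trace out into the projector lets me treat the whole network uniformly.

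First I would insert the update rule $U^l_j(t+\epsilon)=e^{i\epsilon K^l_j(t)}U^l_j(t)$ for the single unitary being varied and expand $e^{i\epsilon K}=\mathbbm{1}+i\epsilon K+\mathcal{O}(\epsilon^2)$, exactly as in the three-unitary worked example preceding the statement. Collecting the terms linear in $\epsilon$ produces a commutator $[K^l_j,\,\cdot\,]$ sandwiched between the remaining unitaries, so that $\mathcal{L}(t+\epsilon)-\mathcal{L}(t)=i\epsilon\,\tfrac{1}{S}\sum_x\tr\big((\mathbbm{1}_\mathrm{rest}\otimes\ket{a}\bra{a})\,[K^l_j,A]\big)+\mathcal{O}(\epsilon^2)$, where $A=U^l_j\dots U^1_1\,\sigma_x\,U^{1\dagger}_1\dots U^{l\dagger}_j$ is the input propagated forward through all unitaries up to and including $U^l_j$.

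Next I would use cyclicity of the trace to move the post-$U^l_j$ unitaries together with the projector onto the other side of the commutator. Writing $B=U^{l\dagger}_{j+1}\dots U^{L+1\dagger}_{m_{L+1}}(\mathbbm{1}_\mathrm{in+hid}\otimes\ket{1}\bra{1})U^{L+1}_{m_{L+1}}\dots U^l_{l+1}$ for the back-propagated projector and applying $\tr([K,A]B)=\tr(K[A,B])$, the first-order change becomes proportional to $\tr\big(K^l_j\,[A,B]\big)$, so that $[A,B]$ is precisely the operator $M^l_j$ in the statement. Since gradient ascent picks $K^l_j$ in the direction maximising this rate of increase and since $U^l_j$ acts nontrivially only on the qubits of layers $l-1$ and $l$, one projects $[A,B]$ onto the corresponding subalgebra by taking $\tr_\mathrm{rest}$ over the remaining qubits; the prefactor $\eta\,2^{m_{l-1}}i/S$ then collects the learning rate $\eta$, the factor $i$ from the commutator, and the dimension factor $2^{m_{l-1}}$ of the input layer that emerges when the identity on the traced-out qubits is eliminated, exactly as in \cite{Beer2020}.

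Finally, the two cases of $M^l_j$ arise from which loss is being optimised. For $l\le g$ only the generator term of $\mathcal{L}_G$ contributes, whose input is always the propagated random state, giving the single-term form carrying $\ket{1}\bra{1}$. For $l>g$ both summands of $\mathcal{L}_D$ act on the discriminator unitaries; here I would use $\bra{0}\rho\ket{0}=1-\bra{1}\rho\ket{1}$ to rewrite the generated-data term so that both contributions carry the same projector $\ket{1}\bra{1}$, turning the sum of the training-data and generated-data terms into the \emph{difference} of forward-propagated inputs appearing in the second form. The main obstacle I anticipate is the bookkeeping: correctly tracking which unitaries and which ancilla tensor factors sit on each side of $U^l_j$ when back-propagating the projector, and pinning down the precise normalisation constant $2^{m_{l-1}}$ from the partial trace — everything else is a direct first-order expansion combined with an application of trace cyclicity.
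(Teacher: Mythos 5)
Your proposal is correct and follows essentially the same route as the paper: first-order expansion of $e^{i\epsilon K^l_j}$, collection of the commutator, trace cyclicity to back-propagate the projector and identify $M^l_j=[A,B]$, and maximisation of $d\mathcal{L}/dt$ (the paper does this via a Pauli-basis parametrisation with a Lagrange multiplier, which is your gradient-ascent-plus-$\tr_\text{rest}$ step made explicit). Your use of $\bra{0}\rho\ket{0}=1-\bra{1}\rho\ket{1}$ to unify the two projectors in $\mathcal{L}_D$ and produce the minus sign in the $l>g$ case is exactly the step the paper performs implicitly when it writes the derivative entirely in terms of $\mathbbm{1}_\mathrm{in+hid}\otimes\ket{1}\bra{1}$.
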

\begin{proof}
	First, we compute the output state of the discriminator after an update with $K_D$. Note that in the following the unitaries act on the current layers, e.g. $U_1^l$ denotes actually $U_1^l\otimes \mathbbm{1}^l_{2,3,\dots,m_l}$. We fix the generator. To derive the update for the discriminator, we need the state when it is fed with the training data, i.e.\
	\begin{align*}
		\rho_{\mathrm{out}}^{D}(t+\epsilon)
		=&\tr_\mathrm{in(D)+hid}\Big(e^{i\epsilon K_{m_{L+1}}^{L+1}}U_{m_{L+1}}^{L+1} \dots e^{i\epsilon K_{1}^{g+1}}U_{1}^{g+1} \ \left(\ket{\phi^T}\bra{\phi^T} \otimes \ket{0...0}\bra{0...0}\right)  \\
		&  U_{1}^{g+1 \dagger}e^{-i\epsilon K_{1}^{g+1}} \dots U_{m_{L+1}}^{L+1 \dagger}e^{-i\epsilon K_{m_{L+1}}^{L+1}}\Big)\\
		=&\rho_{\mathrm{out}}^{D}(t)+i\epsilon\ \tr_\mathrm{in(D)+hid}\Big(
		\big[K_{m_{L+1}}^{L+1},U_{m_{L+1}}^{L+1} \dots U_{1}^{g+1} \left(\ket{\phi^T}\bra{\phi^T} \otimes \ket{0...0}\bra{0...0}\right)\\
		& U_{1}^{g+1 \dagger} \dots U_{m_{L+1}}^{L+1 \dagger} \big]+\dots + U_{m_{L+1}}^{L+1} \dots U_{2}^{g+1} \big[K_{1}^{g+1}, U_{1}^{g+1} \ \ket{\phi^T}\bra{\phi^T} \\
		&\otimes \ket{0...0}\bra{0...0} \ U_{1}^{g+1 \dagger} \big]  U_{2}^{g+1 \dagger} \dots U_{m_{L+1}}^{L+1 \dagger}\Big)+\mathcal{O}(\epsilon^2),
	\end{align*}
	and for the case it gets an input state from the generator, that is
	\begin{align*}
		\rho_{\mathrm{out}}^{G+D}(t+\epsilon)
		=&\tr_\mathrm{in(G)+hid}\Big(e^{i\epsilon K_{m_{L+1}}^{L+1}}U_{m_{L+1}}^{L+1} \dots e^{i\epsilon K_{1}^{g+1}}U_{1}^{g+1 } U_{m_g}^{g}\dots U_1^1  ( \ket{\psi_x^\text{in}}\bra{\psi_x^\text{in}} \\
		&\otimes \ket{0...0}\bra{0...0})   U_1^{1\dagger} \dots U_{m_g}^{g\dagger}   U_{1}^{g+1\dagger}e^{-i\epsilon K_{1}^{g+1}} \dots U_{m_{L+1}}^{L+1\dagger}e^{-i\epsilon K_{m_{L+1}}^{L+1}}\Big)\\
		=&\rho_{\mathrm{out}}^{D}(t)+i\epsilon\ \tr_\mathrm{in(G)+hid}\Big(
		\big[K_{m_{L+1}}^{L+1},U_{m_{L+1}}^{L+1} \dots U_{1}^{1} \ ( \ket{\psi_x^\text{in}}\bra{\psi_x^\text{in}}\\
		& \otimes \ket{0...0}\bra{0...0})  U_{1}^{1 \dagger} \dots U_{m_{L+1}}^{L+1 \dagger} \big]+\dots \\
		&+ U_{m_{L+1}}^{L+1} \dots U_{2}^{g+1} \big[K_{1}^{g+1}, U_{1}^{g+1} \ U_{m_g}^{g}\dots U_1^1  ( \ket{\psi_x^\text{in}}\bra{\psi_x^\text{in}} \otimes \ket{0...0}\bra{0...0}) \\
		&  U_1^{1\dagger} \dots U_{m_g}^{g\dagger} \ U_{1}^{g+1 \dagger} \big] U_{2}^{g+1 \dagger} \dots U_{m_{L+1}}^{L+1 \dagger} \Big) \\
		& +\mathcal{O}(\epsilon^2).	
	\end{align*}
	The derivative of the discriminator loss function has the following form:
	\begin{align*}
		\frac{d\mathcal{L}_D}{dt}=&\lim_{\epsilon\rightarrow 0}\frac{\mathcal{L}_D(t)+i\epsilon\frac{1}{S} \sum_{x=1}^S\bra{1}\tr_\mathrm{in+hid}(\dots)\ket{1}-\mathcal{L}_D(t)}{\epsilon}\\
		=&\frac{i}{S}\ \sum_{x=1}^S\tr_\mathrm{in+hid}\Big(\mathbbm{1}_\mathrm{in+hid}\otimes \ket{1}\bra{1}\Big(\Big(
		\big[K_{m_{L+1}}^{L+1},U_{m_{L+1}}^{L+1} \dots U_{1}^{g+1} \ket{\phi^T}\bra{\phi^T} \\
		&\otimes \ket{0...0}\bra{0...0} U_{1}^{g+1 \dagger} \dots U_{m_{L+1}}^{L+1 \dagger} \big]  +\hdots \\
		& + U_{m_{L+1}}^{L+1} \dots U_{2}^{g+1} \left[K_{1}^{g+1}, U_{1}^{g+1} \ \left(\ket{\phi^T}\bra{\phi^T} \otimes \ket{0...0}\bra{0...0}\right) \ U_{1}^{g+1 \dagger} \right]\\
		&  U_{2}^{g+1 \dagger} \dots U_{m_{L+1}}^{L+1 \dagger}\Big)  \\
		&- \Big(
		\big[K_{m_{L+1}}^{L+1},U_{m_{L+1}}^{L+1} \dots U_{1}^{1} \ ( \ket{\psi_x^\text{in}}\bra{\psi_x^\text{in}} \\
		& \otimes \ket{0...0}\bra{0...0}) U_{1}^{1 \dagger} \dots U_{m_{L+1}}^{L+1 \dagger} \big]+\dots \\
		& + U_{m_{L+1}}^{L+1} \dots U_{2}^{g+1} \big[K_{1}^{g+1}, U_{1}^{g+1} \ U_{m_g}^{g}\dots U_1^1  ( \ket{\psi_x^\text{in}}\bra{\psi_x^\text{in}} \otimes \ket{0...0}\bra{0...0}) \\
		&  U_1^{1\dagger} \dots U_{m_g}^{g\dagger} \ U_{1}^{g+1 \dagger} \big] U_{2}^{g+1 \dagger} \dots U_{m_{L+1}}^{L+1 \dagger} \Big)\Big)\Big) \\
		=&\frac{i}{S}\ \sum_{x=1}^S\tr_\mathrm{in+hid}\Big(\\
		&\Big[  U_{m_{L+1}}^{L+1} \dots U_{1}^{g+1} \left(\ket{\phi^T}\bra{\phi^T} \otimes \ket{0...0}\bra{0...0}\right) U_{1}^{g+1 \dagger} \dots U_{m_{L+1}}^{L+1 \dagger}\\
		&-U_{m_{L+1}}^{L+1} \dots U_{1}^{1} \ ( \ket{\psi_x^\text{in}}\bra{\psi_x^\text{in}} \otimes \ket{0...0}\bra{0...0})  U_{1}^{1 \dagger} \dots U_{m_{L+1}}^{L+1 \dagger} ,\\
		&\mathbbm{1}_\mathrm{in+hid}\otimes \ket{1}\bra{1}\Big]K_{m_{L+1}}^{L+1}+\dots +\Big[ U_{1}^{g+1} \left(\ket{\phi^T}\bra{\phi^T} \otimes \ket{0...0}\bra{0...0}\right) U_{1}^{g+1 \dagger}\\
		& - U_{1}^{g+1} U_{m_g}^{g} \dots U_{1}^{1} \ ( \ket{\psi_x^\text{in}}\bra{\psi_x^\text{in}} \otimes \ket{0...0}\bra{0...0}) \\
		&   U_{1}^{1 \dagger} \dots U_{m_{g}}^{g\dagger}  U_{1}^{g+1\dagger} , U_{m_{L+1}}^{L+1\dagger}\dots U_{2}^{g+1\dagger} \mathbbm{1}_\mathrm{in+hid}\otimes \ket{1}\bra{1}U_{2}^{g+1} \dots U_{m_{L+1}}^{L+1}\Big]K_{1}^{g+1}\Big)\\
		=&\frac{i}{S}\ \sum_{x=1}^S\tr_\mathrm{in+hid}\left(M_{m_{L+1}}^{L+1}K_{m_{L+1}}^{L+1}+\dots+M_{1}^{g+1}K_{1}^{g+1}\right).
	\end{align*}
	Note that at this point $ \ket{\phi^T}\bra{\phi^T} \otimes \ket{0...0}\bra{0...0}$ denotes $\mathbbm{1}_{in(G)+hid(G)} \otimes  \otimes   \ket{\phi^T}\bra{\phi^T} \otimes \ket{0...0}\bra{0...0} \otimes \mathbbm{1}_G \otimes \dots \otimes \mathbbm{1}_G $, to match the dimension of the other summand.
	
	Until here we fixed the generator. Now we study the second part of the algorithm: the generator is fixed instead. Using the state
	\begin{align*}
		\rho_{\mathrm{out 2}}^{G+D}(s+\epsilon)
		=&\tr_\mathrm{in(G)+hid}\Big(U_{m_{L+1}}^{L+1 } \dots U_{1}^{g+1}  \ e^{i\epsilon K_{m_{g}}^{g}}U_{m_{g}}^{g} \dots e^{i\epsilon K_{1}^{1}}U_{1}^{1 } \ ( \ket{\psi_x^\text{in}}\bra{\psi_x^\text{in}} \\
		&\otimes \ket{0...0}\bra{0...0})   U_{1}^{1\dagger} e^{-i\epsilon K_{1}^{1}}\dots U_{m_{g}}^{g\dagger} e^{-i\epsilon K_{m_{g}}^{g}}U_{1}^{g+1\dagger}\dots U_{m_{L+1}}^{L+1 \dagger}\Big)\\
		=&\rho_{\mathrm{out}}^{D}(t)+i\epsilon\ \tr_\mathrm{in()+hid}\Big(
		U_{m_{L+1}}^{L+1 } \dots U_{1}^{g+1}\big[K_{m_{g}}^{g},U_{m_{g}}^{g} \dots U_{1}^{1} \ ( \ket{\psi_x^\text{in}}\bra{\psi_x^\text{in}} \\
		& \otimes \ket{0...0}\bra{0...0})  U_{1}^{1 \dagger} \dots U_{m_{g}}^{g \dagger} \big] U_{1}^{g+1\dagger}\dots U_{m_{L+1}}^{L+1 \dagger} +\dots \\
		&+ U_{m_{L+1}}^{L+1} \dots U_{2}^{1} \left[K_{1}^{1}, U_{1}^{1} \ ( \ket{\psi_x^\text{in}}\bra{\psi_x^\text{in}} \otimes \ket{0...0}\bra{0...0}) \ U_{1}^{1 \dagger} \right] U_{2}^{1 \dagger} \dots U_{m_{L+1}}^{L+1 \dagger} \Big) \\
		& +\mathcal{O}(\epsilon^2)	
		G\end{align*}
	the derivative of the loss function for training the generator becomes
	\begin{align*}
		\frac{d\mathcal{L}_G}{dt}=&\lim_{\epsilon\rightarrow 0}\frac{\mathcal{L}_G(t)+i\epsilon\frac{1}{S} \sum_x\bra{1}\tr_\mathrm{in+hid}(\dots)\ket{1}-\mathcal{L}_G(t)}{\epsilon}\\
		=&\frac{i}{S}\ \sum_{x=1}^S\tr\Big(\mathbbm{1}_\mathrm{in+hid}\otimes \ket{1}\bra{1}\Big(\Big(
		U_{m_{l+1}}^{l+1 } \dots U_{1}^{g+1}\big[K_{m_{g}}^{g},U_{m_{g}}^{g} \dots U_{1}^{1} \ ( \ket{\psi_x^\text{in}}\bra{\psi_x^\text{in}} \\
		&\otimes \ket{0...0}\bra{0...0})  U_{1}^{1 \dagger} \dots U_{m_{g}}^{g \dagger} \big] U_{1}^{g+1\dagger}\dots U_{m_{l+1}}^{l+1 \dagger} +\dots \\
		&+ U_{m_{l+1}}^{l+1} \dots U_{2}^{1} \left[K_{1}^{1}, U_{1}^{1} \ ( \ket{\psi_x^\text{in}}\bra{\psi_x^\text{in}} \otimes \ket{0...0}\bra{0...0}) \ U_{1}^{1 \dagger} \right] U_{2}^{1 \dagger} \dots U_{m_{l+1}}^{l+1 \dagger} \Big)   \Big)\Big) \\
		=&\frac{i}{S}\ \sum_{x=1}^S\tr\Big(\\
		&\Big[ U_{m_{g}}^{g} \dots U_{1}^{1} \ ( \ket{\psi_x^\text{in}}\bra{\psi_x^\text{in}} \otimes \ket{0...0}\bra{0...0})  U_{1}^{1 \dagger} \dots U_{m_{g}}^{g \dagger}, \\
		&U_{1}^{g+1\dagger}\dots U_{m_{l+1}}^{l+1 \dagger} \left(\mathbbm{1}_\mathrm{in+hid}\otimes \ket{1}\bra{1}\right) U_{m_{l+1}}^{l+1 } \dots U_{1}^{g+1}\Big]K_{m_{g}}^{g}+\dots \\
		&+\Big[ U_{1}^{1} \ ( \ket{\psi_x^\text{in}}\bra{\psi_x^\text{in}} \otimes \ket{0...0}\bra{0...0})  U_{1}^{1 \dagger} , \\
		&U_{2}^{1 \dagger} \dots U_{m_{l+1}}^{l+1 \dagger}  \left(\mathbbm{1}_\mathrm{in+hid}\otimes \ket{1}\bra{1}\right) U_{m_{l+1}}^{l+1} \dots U_{2}^{1}\Big]K_{1}^{1}\Big)\\
		\equiv&\frac{i}{S}\ \sum_{x=1}^S\tr\left(M_{m_{g}}^{g}K_{m_{g}}^{g}+\dots+M_{1}^{1}K_{1}^{1}\right).
	\end{align*}
	
	In both updates, we parametrise the parameter matrices analogously as
	\begin{equation*}
		K_j^l(t)=\sum_{\alpha_1,\alpha_2,\dots,\alpha_{m_{l-1}},\beta}K^l_{j,\alpha_1,\dots,\alpha_{m_{l-1}},\beta}(t)\left(\sigma^{\alpha_1}\otimes\ \dots\ \otimes\sigma^{\alpha_{m_{l-1}}}\otimes\sigma^\beta\right),
	\end{equation*}
	where the $\alpha_i$ denote the qubits in the previous layer and $\beta$ denotes the current qubit in layer $l$. To achieve the maximum of the loss function as a function of the parameters \emph{fastest}, we maximise $\frac{d\mathcal{L}}{dt}$. Since this is a linear function, the extrema are at $\pm\infty$. To ensure that we get a finite solution, we introduce a Lagrange multiplier $\lambda\in\mathbbm{R}$. Hence, to find $K_j^l$ we have to solve the following maximisation problem (here for the discriminator update, the update for the generator is analogous):
	\begin{align*}
		\max_{K^l_{j,\alpha_1,\dots,\beta}}&\left(\frac{dC(t)}{dt}-\lambda\sum_{\alpha_i,\beta}{K^l_{j,\alpha_1,\dots,\beta}}(t)^2\right)\\
		&=\max_{K^l_{j,\alpha_1,\dots,\beta}}\left(\frac{i}{S}\sum_{x=1}^S \tr\left(M_{m_{L+1}}^{L+1}K_{m_{L+1}}^{L+1}+\dots+M_{1}^{g+1}K_{1}^{g+1}\right)-\lambda\sum_{\alpha_1,\dots,\beta}{K^l_{j,\alpha_1,\dots,\beta}}(t)^2\right)\\
		&=\max_{K^l_{j,\alpha_1,\dots,\beta}}\Big(\frac{i}{S}\sum_{x=1}^S\tr_{\alpha_1,\dots,\beta}\left(\tr_\mathrm{rest}\left(M_{m_{L+1}}^{L+1}K_{m_{L+1}}^{L+1}+\dots+M_{1}^{g+1}K_{1}^{g+1}\right)\right)\\
		&-\lambda\sum_{\alpha_1,\dots,\beta}{K^l_{j,\alpha_1,\dots,\beta}}(t)^2\Big).
	\end{align*}
	
	Taking the derivative with respect to $K^l_{j,\alpha_1,\dots,\beta}$ yields
	\begin{align*}
		\frac{i}{S}\sum_{x=1}^S\tr_{\alpha_1,\dots,\beta}\left(\tr_\mathrm{rest}\left(M_j^l(t)\right)\left(\sigma^{\alpha_1}\otimes\ \dots\ \otimes\sigma^\beta\right)\right)-2\lambda K^l_{j,\alpha_1,\dots,\beta}(t)=0,
	\end{align*}
	hence,
	\begin{align*}
		K^l_{j,\alpha_1,\dots,\beta}(t)=\frac{i}{2S\lambda}\sum_{x=1}^S\tr_{\alpha_1,\dots,\beta}\left(\tr_\mathrm{rest}\left(M_j^l(t)\right)\left(\sigma^{\alpha_1}\otimes\ \dots\ \otimes\sigma^\beta\right)\right)
	\end{align*}
	This yields the matrix 
	\begin{align*}
		K_j^l(t)&=\sum_{\alpha_1,\dots,\beta}K^l_{j,\alpha_1,\dots,\beta}(t)\left(\sigma^{\alpha_1}\otimes\ \dots\ \otimes\sigma^\beta\right)\\
		&=\frac{i}{2S\lambda}\sum_{\alpha_1,\dots,\beta}\sum_{x=1}^S\tr_{\alpha_1,\dots,\beta}\left(\tr_\mathrm{rest}\left(M_j^l(t)\right)\left(\sigma^{\alpha_1}\otimes\ \dots\ \otimes\sigma^\beta\right)\right)\left(\sigma^{\alpha_1}\otimes\ \dots\ \otimes\sigma^\beta\right)\\
		&=\frac{\eta2^{m_{l-1}}i}{2S}\sum_{x=1}^S\tr_\mathrm{rest}\left(M_j^l(t)\right),
	\end{align*}
	where $\eta=1/\lambda$ is the learning rate and $\tr_\text{rest}$ traces out all qubits that the perceptron unitary $U_j^l$ does not act on.
	
	Notice again that $K_j^l$ updates the generator, if $j\le g$ for the number of layers $g$ of the generator. The definition of $M_j^l$ is
	\begin{align*}
		M_j^l =& \Big[ U_{j}^{l} \dots U_{1}^{1} \ ( \ket{\psi_x^\text{in}}\bra{\psi_x^\text{in}} \otimes \ket{0...0}\bra{0...0})  U_{1}^{1 \dagger} \dots U_{j}^{l \dagger}, \\
		&U_{j+1}^{l\dagger}\dots U_{m_{L+1}}^{L+1 \dagger} \left(\mathbbm{1}_\mathrm{in+hid}\otimes \ket{1}\bra{1}\right)U_{m_{L+1}}^{L+1 } \dots U_{l+1}^{l}\Big]
	\end{align*}
	for $l\le g$ and 
	\begin{align*}
		M_j^l =& \Big[  U_{j}^{l} \dots U_{1}^{g+1} \left(\ket{\phi^T}\bra{\phi^T} \otimes \ket{0...0}\bra{0...0}\right) U_{1}^{g+1 \dagger}\dots U_{j}^{l \dagger}   \\
		&- U_{j}^{l} \dots U_{1}^{g+1} U_{m_g}^{g} \dots U_{1}^{1} \ ( \ket{\psi_x^\text{in}}\bra{\psi_x^\text{in}} \otimes \ket{0...0}\bra{0...0})   U_{1}^{1 \dagger} \dots U_{m_{g}}^{g\dagger}  U_{1}^{g+1 \dagger}\dots U_{j}^{l \dagger} ,\\
		&U_{j+1}^{l\dagger}\dots U_{m_{L+1}}^{L+1 \dagger} \left(\mathbbm{1}_\mathrm{in+hid}\otimes \ket{1}\bra{1}\right)U_{m_{L+1}}^{L+1 } \dots U_{l+1}^{l}\Big]
	\end{align*}
	else.
\end{proof}

\section{Implementation of the DQNN\textsubscript{Q} as a PQC\label{section:dqnn_q_implementation_details}}
The DQNN\textsubscript{Q} intends to realise each neuron as a separate qubit. Thus, implementing the DQNN\textsubscript{Q} as a quantum circuit requires $M=\sum_{l=0}^{L+1} m_l$ qubits. This results in a $2^M$-dimensional Hilbert space $\mathcal{H}^{\otimes M}$ which is the tensor product of $M$ single qubit Hilbert spaces.

The main task of implementing the DQNN described by \cref{eq:DQNN_rhoOut} is to find an appropriate realisation of the quantum perceptron $U^l_j$ which is a general unitary acting on $m_{l-1}+1$ qubits. For the simulation on a classical computer it is sufficient to abstractly define the unitary matrix and update its entries during the training. However, to execute the DQNN on a quantum computer, a concrete realisation in the form of parameterised quantum gates is necessary to build. Once the parameterised quantum gates for representing the quantum perceptron are chosen, the full PQC can be built by composing the respective quantum perceptrons from all layers. When thinking about possible candidates for parameterised quantum gates, two objectives have to be well-balanced: on the one hand, the final realisation of the quantum perceptron should be as universal as possible, while on the other hand, the number of quantum gates and parameters should be kept as small as possible. If either one of these objectives is neglected, the DQNN\textsubscript{Q} will not perform as well as its classically simulated model.

Any arbitrary two-qubit unitary can be expressed by a two-qubit canonical gate and twelve single-qubit gates \cite{Crooks2019}. The two-qubit canonical gate is defined via three parameters as:
\begin{align}
\begin{split}
\text{CAN}(t_x,t_y,t_z) &= e^{-i\frac{\pi}{2}t_x X \otimes X}e^{-i\frac{\pi}{2}t_y Y \otimes Y}e^{-i\frac{\pi}{2}t_z Z \otimes Z} \\
&= \text{RXX}(t_x\pi)\;\text{RYY}(t_y\pi)\;\text{RZZ}(t_z\pi)
\end{split}
\end{align}
where $X = \begin{psmallmatrix} 0&1\\ 1&0 \end{psmallmatrix}$, $Y = \begin{psmallmatrix} 0&-i\\ i&0 \end{psmallmatrix}$, $Z = \begin{psmallmatrix} 1&0\\ 0&-1 \end{psmallmatrix}$ are the Pauli matrices, the RXX/RYY/RZZ gates are parameterised two qubit gates commonly available in quantum computing libraries, and $t_{x,y,z}\in\real$ are the parameters. The necessary single qubit gates are parameterised Pauli-$Y$ and Pauli-$Z$ operators. These are equivalent to the following rotations around the $y$- and the $z$-axis:
\begin{align}\label{eq:single_qubit_rotations}
\begin{split}
Y^t \simeq R_Y(\pi t) = e^{-i\frac{\pi}{2}tY} \\
Z^t \simeq R_Z(\pi t) = e^{-i\frac{\pi}{2}tZ}
\end{split}
\end{align}
up to a phase factor which is indicated by $\simeq$. By executing the two-qubit canonical gate in addition to prepending and appending three single qubit gates to each qubit in the following form:
\begin{equation}\label{eq:universal_two_qubit_gate}
\begin{tikzpicture}[xscale=1.2]
\draw (0,1)-- (10,1);
\draw (0,0)-- (10,0);
\node[operator0,minimum height=0.5cm] at (1,1){$Z^{t_1}$};
\node[operator0,minimum height=0.5cm] at (2,1){$Y^{t_2}$};
\node[operator0,minimum height=0.5cm] at (3,1){$Y^{t_3}$};
\node[operator0,minimum height=0.5cm] at (1,0){$Z^{t_4}$};
\node[operator0,minimum height=0.5cm] at (2,0){$Y^{t_5}$};
\node[operator0,minimum height=0.5cm] at (3,0){$Y^{t_6}$};
\node[operator1,minimum height=1.5cm] at (5,0.5){$\can(t_7,t_8,t_9)$};
\node[operator0,minimum height=0.5cm] at (7,1){$Z^{p_{10}}$};
\node[operator0,minimum height=0.5cm] at (8,1){$Y^{p_{11}}$};
\node[operator0,minimum height=0.5cm] at (9,1){$Y^{p_{12}}$};
\node[operator0,minimum height=0.5cm] at (7,0){$Z^{p_{13}}$};
\node[operator0,minimum height=0.5cm] at (8,0){$Y^{p_{14}}$};
\node[operator0,minimum height=0.5cm] at (9,0){$Y^{p_{15}}$};
\end{tikzpicture}
\end{equation}
any arbitrary two-qubit gate can be performed. As a graphical simplification, the used sequence $Z$-$Y$-$Z$ of single-qubit gates can be expressed as the commonly used single qubit gate $u(t_1,t_2,t_3)$:
\begin{align}\label{eq:single_qubit_sequence}
u(t_1,t_2,t_3) = R_Z(t_2)R_Y(t_1)R_Z(t_3) =
\begin{pmatrix}
\cos (t_1/2) & -e^{it_3}\sin (t_1/2) \\
e^{it_2}\sin (t_1/2) & e^{i(t_2+t_3)}\cos (t_1/2)
\end{pmatrix}
\end{align}
where the different parameterisation compared to \Cref{eq:single_qubit_rotations} should be noted.

The quantum perceptron is not, in general, a two-qubit unitary. Therefore the universal two-qubit gate from \ref{eq:universal_two_qubit_gate} can not directly be used. When thinking about implementing the universal two-qubit gate, it is helpful to think about the task fulfilled by the quantum perceptron, which is to process the states of its input qubits and change the output qubit's state accordingly. This motivates the application of separate two-qubit gates on each input-output qubit pair. However, numerical studies have shown that it is sufficient and often advantageous to refrain from using the single-qubit sequence from \Cref{eq:single_qubit_sequence} and only use the two-qubit canonical gate as the direct realisation of the quantum perceptrons. In addition to realising the entire layer unitary $U^l$, i.e., all quantum perceptrons corresponding to layer $l$, the three-parameter single-qubit gate $u$ is prepended to all input qubits and appended to all output qubits. To append single-qubit gates on the input qubits is pointless, as these are no longer used. To prepend single-qubit gates on the input qubits has proven unnecessary in numerical studies.

The interpretation of the DQNN as a quantum circuit employing the previously discussed methods looks as follows. The first $m_0$ qubits are initialised in a given, possibly mixed state $\rho_\text{in}$, while all remaining qubits are initialised in the computational basis state $\ket{0}$. The quantum circuit and the general DQNN architecture are structured layer-wise and will therefore be described accordingly. The $u$ gates are applied first, layer by layer ($l=1,\dots,L+1$), to the respective $m_{l-1}$ input qubits. After that, the layer unitary $U^l = \prod _{j=m_l}^1U_j^l$ is applied to all input and output qubits. Here, $U^l_j$ is a sequence of $m_{l-1}$ CAN gates where the $i^\text{th}$ CAN gate acts on the $i^\text{th}$ input and the $j^\text{th}$ output qubit. After each layer $l$, the $m_{l-1}$ input qubits are neglected, i.e., they are just ignored for the rest of the quantum circuit. This layer's $m_l$ output qubits serve as the input qubits for the next layer $l+1$. By this, the partial trace of \Cref{eq:DQNN_rhoOut} is realised. After the output layer $L+1$, again, $u$ gates are applied to the remaining $m$ output qubits. Thus, the quantum circuit consists of $N_p = 3m + 3\sum ^{L+1}_{l=1} m_{l-1} (1+m_{l})$ parameters.

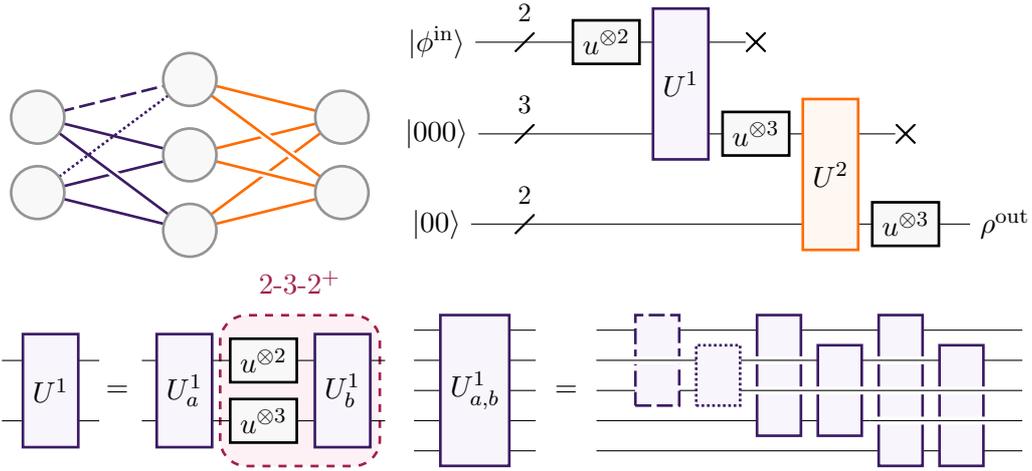
\begin{figure}
\centering
\begin{subfigure}[t]{0.35\linewidth}
\centering
\begin{tikzpicture}[scale=1]
\foreach \x in {-.5,.5} {
\draw[line0] (0,\x) -- (2,-1);
\draw[line1] (0,\x) -- (2,-1);
\draw[line0] (0,\x) -- (2,0);
\draw[line1] (0,\x) -- (2,0);
\draw[line0] (0,\x) -- (2,1);
}
\draw[line1,densely dotted] (0,-.5) -- (2,1);
\draw[line1, dash pattern=on 6pt off 2pt] (0,.5) -- (2,1);
\foreach \x in {-1,0,1} {
\draw[line0] (2,\x) -- (4,-0.5);
\draw[line2] (2,\x) -- (4,-0.5);
\draw[line0] (2,\x) -- (4,0.5);
\draw[line2] (2,\x) -- (4,0.5);
}
\node[perceptron0] at (0,-0.5) {};
\node[perceptron0] at (0,0.5) {};
\node[perceptron0] at (2,-1) {};
\node[perceptron0] at (2,0) {};
\node[perceptron0] at (2,1) {};
\node[perceptron0] at (4,-0.5) {};
\node[perceptron0] at (4,0.5) {};
\end{tikzpicture}
\end{subfigure}\begin{subfigure}[t]{0.59\linewidth}
\centering
\begin{tikzpicture}[]
\matrix[row sep=0.25cm, column sep=0.4cm] (circuit) {
\node(start3){$\ket{\phi^\text{in}}$};  
& \node[halfcross,label={\small 2}] (c13){};
& \node[operator0] (c23){$u^{\otimes 2}$};
& \node[]{}; 
& \node[dcross](end3){}; 
& \node[]{}; 
& \node[]{}; 
& \node[]{}; \\
\node(start2){$\ket{000}$};
& \node[halfcross,label={\small 3}] (c12){};
& \node[]{}; 
& \node[]{}; 
& \node[operator0] (c32){$u^{\otimes 3}$};
& \node[]{}; 
& \node[dcross](end2){}; 
& \node[]{};  \\
\node(start1){$\ket{00}$};
& \node[halfcross,label={\small 2}] (c11){};
& \node[]{}; 
& \node[]{}; 
& \node[]{}; 
& \node[]{}; 
& \node[operator0] (c41){$u^{\otimes 3}$};
& \node (end1){$\rho ^\text{out}$}; \\
};
\begin{pgfonlayer}{background}
\draw[] (start1) -- (end1)  
(start2) -- (end2)
(start3) -- (end3);
\node[operator1, minimum height=2cm] at (-.48,0.5) {$U^1$};
\node[operator2,minimum height=2cm] at (1.49,-.7) {$U^2$};
\end{pgfonlayer}
\end{tikzpicture}
\end{subfigure}
\vspace{0.5 cm}
\begin{subfigure}[t]{0.35\linewidth}
\centering
\begin{tikzpicture}[scale=0.8]
\draw[white] (0,2.2)-- (1,2.2);
\draw (0.2,1)-- (1.8,1);
\draw (0.2,0)-- (1.8,0);
\node[operator1,minimum height=1.5cm] at (1,0.5){$U^1$};
\node[] at (2.1,0.5){$=$};
\begin{scope}[xshift=2.5cm]
\node[rounded corners=.35cm,draw=color3,line width=1pt,dashed,minimum height=2.0cm,minimum width=2.1cm, fill=color3L] at (2.6,0.5){};
\node[color3] at (2.6,2.3){2-3-2$^+$};
\draw (0,1)-- (4,1);
\draw (0,0)-- (4,0);
\node[operator1,minimum height=1.5cm] at (0.7,0.5){$U^1_a$};
\node[operator0]at (2,1) {$u^{\otimes 2}$};
\node[operator0]at (2,0) {$u^{\otimes 3}$};
\node[operator1,minimum height=1.5cm] at (3.3,0.5){$U^1_b$};
\end{scope}
\end{tikzpicture}
\end{subfigure}
\begin{subfigure}[t]{0.59\linewidth}
\centering
\begin{tikzpicture}[yscale=.4,xscale=0.8]
\draw (0,4)-- (2,4);
\draw (0,3)-- (2,3);
\draw (0,2)-- (2,2);
\draw (0,1)-- (2,1);
\draw (0,0)-- (2,0);
\node[operator1,minimum height=2.0cm] at (1,2){$U^1_{a,b}$};
\node[] at (2.5,2){$=$};
\begin{scope}[xshift=3cm]
\draw[white] (0,4.5)-- (7,4.5);
\draw (0,4)-- (7,4);
\draw (0,3)-- (7,3);
\draw (0,2)-- (7,2);
\draw (0,1)-- (7,1);
\draw (0,0)-- (7,0);
\node[operator1,minimum height=1.2cm,line width=1pt, dash pattern=on 6pt off 2pt] at (1,3){};
\node[operator1,minimum height=0.8cm, line width=1pt, densely dotted] at (2,2.5){};
\node[operator1,minimum height=1.6cm] at (3,2.5){};
\node[operator1,minimum height=1.2cm] at (4,2){};
\node[operator1,minimum height=2.0cm] at (5,2){};
\node[operator1,minimum height=1.6cm] at (6,1.5){};
\draw[line0] (0,3)-- (1.5,3);
\draw (0,3)-- (1.5,3);
\draw[line0] (2.5,3)-- (3.5,3);
\draw (2.5,3)-- (3.5,3);
\draw[line0] (4.5,3)-- (5.5,3);
\draw (4.5,3)-- (5.5,3);
\draw[line0] (2.5,2)-- (7,2);
\draw (2.5,2)-- (7,2);
\draw[line0] (4.5,1)-- (7,1);
\draw (4.5,1)-- (7,1);
\end{scope}
\end{tikzpicture}
\end{subfigure}
\caption{An exemplary DQNN\textsubscript{Q} implementation as a parameterised quantum circuit suitable for the execution on NISQ devices. The unitaries $U^l$ implement the layer-to-layer transition from the layer $l-1$ to $l$. In the standard 2-3-2 network, $U^l$ consists of $m_{l-1}\cdot m_l$ two-qubit CAN gates. In the computationally more powerful 2-3-2$^+$ network, $U^l$ features additional gates as shown in the pink dashed box.}
\label{fig:dqnn_circuit_implementation}
\end{figure}

Due to the limitations of the current NISQ devices one is often interested in increasing the computational power of the DQNN\textsubscript{Q} without using additional qubits. In this case, the quantum perceptron can be modified such that the DQNN\textsubscript{Q}'s layer-to-layer transition gets computationally more powerful. This modification is denoted with a $^+$ as in 2-3-2$^+$. The corresponding DQNN\textsubscript{Q} is defined with the additional parameterised quantum gates shown in the pink dashed box in \cref{fig:dqnn_circuit_implementation}. The layer unitaries $U^1_a$ and $U^1_b$ share the same structure but are parameterised independently.

\section{Further numerical results}\label{apnx:numerics}

In \cref{section_results} we discussed the classical simulation of the DQGAN algorithm. In the following we extend the numerical examples of this section.

\begin{figure}[H]
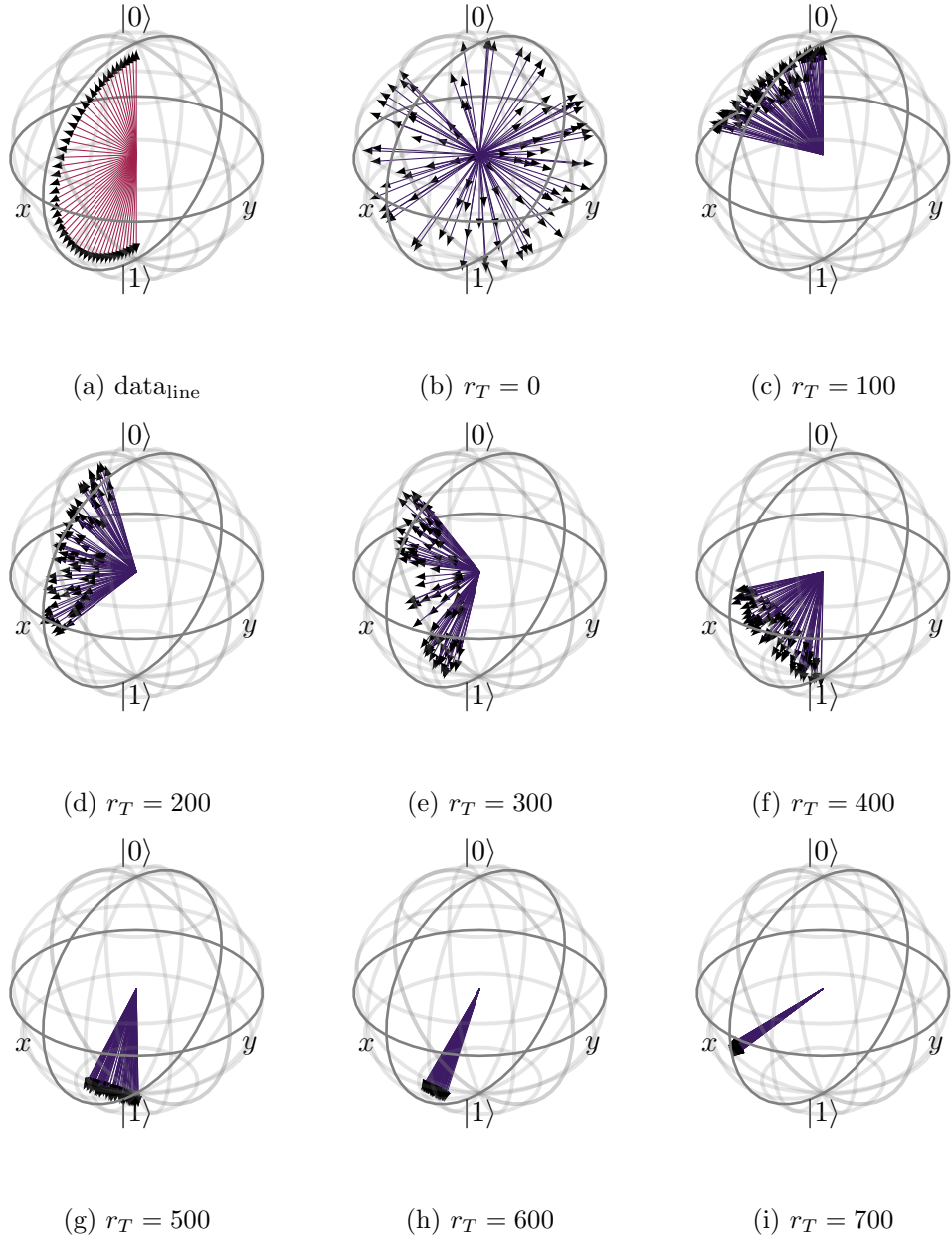

\centering
\begin{subfigure}{0.3\linewidth}
\input{numerics/T}
\subcaption{$\text{data}_\text{line}$}
\end{subfigure}
\begin{subfigure}{0.3\linewidth}
\input{numerics/0}
\subcaption{$r_T=0$}
\end{subfigure}
\begin{subfigure}{0.3\linewidth}
\input{numerics/100}
\subcaption{$r_T=100$}
\end{subfigure}

\begin{subfigure}{0.3\linewidth}
\input{numerics/200}
\subcaption{$r_T=200$}
\end{subfigure}
\begin{subfigure}{0.3\linewidth}
\input{numerics/300}
\subcaption{$r_T=300$}
\end{subfigure}
\begin{subfigure}{0.3\linewidth}
\input{numerics/400}
\subcaption{$r_T=400$}
\end{subfigure}

\begin{subfigure}{0.3\linewidth}
\input{numerics/500}
\subcaption{$r_T=500$}
\end{subfigure}
\begin{subfigure}{0.3\linewidth}
\input{numerics/600}
\subcaption{$r_T=600$}
\end{subfigure}
\begin{subfigure}{0.3\linewidth}
\input{numerics/700}
\subcaption{$r_T=700$}
\end{subfigure}
\caption{\textbf{Output of the generator.} To compare the output of the generator (b-i), during the training of a \protect\oneoneone DQGAN, to the data set $\text{data}_\text{line}$ (a) we plot the states in Bloch spheres.}
\label{fig:apdx_bloch}
\end{figure}

First of all, \cref{fig:apdx_bloch} gives an overview of the generator's different training situations following the training depicted in \cref{fig:GAN_line}. At every of these training steps we build a set of $100$ by the generator produced states and plot them in a Bloch sphere.

Secondly, in \label{fig:apdx_line} we train a 1-3-1 DQGAN with the training data
\begin{equation*}
\text{data}_\text{line'}=\left\{\frac{(N-x)\ket{000}+(x-1)\ket{001}}{||(N-x)\ket{000}+(x-1)\ket{001}||}\right\}_{x=1}^{N},
\end{equation*} for $N=50$. 

\begin{figure}[H]
\centering
\begin{tikzpicture}
\begin{axis}[
xmin=0,   xmax=20,
ymin=0,   ymax=2,
width=0.8\linewidth, 
height=0.5\linewidth,
grid=major,grid style={color0M},
xlabel= Training epochs $r_T$, 
xticklabels={-100,0,100,200,300,400,500,600,700,800,900,1000},
ylabel=$\mathcal{L}(t)$,legend pos=north east,legend cell align={left},legend style={draw=none,legend image code/.code={\filldraw[##1] (-.5ex,-.5ex) rectangle (0.5ex,0.5ex);}}]
\coordinate (0,0) ;
\addplot[mark size=1.5 pt,color=color2] table [x=step times epsilon, y=costFunctionDis, col sep=comma] {numerics/QGAN_50data10sv_100statData_100statData_1-3networkGen_3-1networkDis_lda1_ep0i01_rounds1000_roundsGen1_roundsDis1_connectedLine_training.csv};
\addlegendentry[mark size=10 pt,scale=1]{Training loss $\mathcal{L}_\text{D}$} 
\addplot[mark size=1.5 pt,color=color1] table [x=step times epsilon, y=costFunctionGen, col sep=comma] {numerics/QGAN_50data10sv_100statData_100statData_1-3networkGen_3-1networkDis_lda1_ep0i01_rounds1000_roundsGen1_roundsDis1_connectedLine_training.csv};
\addlegendentry[scale=1]{Training loss $\mathcal{L}_\text{G}$} 
\addplot[mark size=1.5 pt,color=color3] table [x=step times epsilon, y=costFunctionTest, col sep=comma] {numerics/QGAN_50data10sv_100statData_100statData_1-3networkGen_3-1networkDis_lda1_ep0i01_rounds1000_roundsGen1_roundsDis1_connectedLine_training.csv};
\addlegendentry[scale=1]{Validation loss $\mathcal{L}_\text{V}$} 
\end{axis}
\end{tikzpicture}
\caption{\textbf{Training a DQGAN.} The evolution of the training losses and validationloss during the training of a \protect\oneothreeone DQGAN in $r_T=1000$ epochs with $\eta=1$ and $\epsilon=0.01$ using $50$ data pairs of the data set $\text{data}_\text{line'}$ where $10$ are used as training states.}
\label{fig:apdx_line}
\end{figure}

For a more comprehensive study, we averaged the histogram resulting after $200$ training rounds using ten independent training attempts and $10$ randomly chosen training states of $\text{data}_\text{line}$. \cref{fig:GAN_lineComp} shows that the diversity of the generator's output is good, since all elements in $\text{data}_\text{line}$ get produced quite equally.

Moreover, we build an equivalent plot with the difference of choosing randomly $10$ training states of $\text{data}_\text{cl}$, where
\begin{equation*}
\text{data}_\text{cl}= \left\{\frac{(2N-1)\ket{0}+(x-1)\ket{1}}{||(2N-1)\ket{0}+(x-1)\ket{1}||}\right\}_{x=1}^{\tfrac{N}{2}}\cup\left\{\frac{(2N-1)\ket{0}+(x-1)\ket{1}}{||(2N-1)\ket{0}+(x-1)\ket{1}||}\right\}_{x=\tfrac{3N}{2}}^{2N}. 
\end{equation*}

\begin{figure}[H]
\centering
\begin{subfigure}{\textwidth}\centering
\begin{tikzpicture}[scale=1]
\begin{axis}[
ybar,
bar width=1.5pt,
xmin=0,   xmax=51,
ymin=0,   ymax=9,
width=.8\linewidth, 
height=.28\linewidth,
grid=major,
grid style={color0M},
xlabel= State index $x$, 
ylabel=Counts,legend pos=north east,legend cell align={left}]
\addplot[color=color3, fill=color3] table [x=index,y=countTTMean, col sep=comma] {numerics/QGAN_50data10sv_100statData_100statData_1-1networkGen_1-1networkDis_lda1_ep0i01_rounds200_roundsGen1_roundsDis1_line_statMean.csv};
\end{axis}
\end{tikzpicture}
\caption{Line trained with DQNN.} \label{fig:GAN_lineComp}
\end{subfigure}
\begin{subfigure}{\textwidth}\centering
\begin{tikzpicture}[scale=1]
\begin{axis}[
ybar,
bar width=1.5pt,
xmin=0,   xmax=51,
ymin=0,   ymax=30,
width=.8\linewidth, 
height=.28\linewidth,
grid=major,
grid style={color0M},
xlabel= State index $x$, 
ylabel=Counts,legend pos=north east,legend cell align={left}]
\addplot[color=color3, fill=color3] table [x=index,y=countTTMean, col sep=comma] {numerics/QGAN_50data10sv_100statData_100statData_1-1networkGen_1-1networkDis_lda1_ep0i01_rounds200_roundsGen1_roundsDis1_CvsLi_statMean.csv};
\end{axis}
\end{tikzpicture}
\caption{Two clusters trained with DQNN.} \label{fig:GAN_ClusComp}
\end{subfigure}
\begin{subfigure}{\textwidth}\centering
\begin{tikzpicture}[scale=1]
\begin{axis}[
ybar,
bar width=1.5pt,
xmin=0,   xmax=51,
ymin=0,   ymax=12,
width=.8\linewidth, 
height=.28\linewidth,
grid=major,
grid style={color0M},
xlabel= State index $x$, 
ylabel=Counts,legend pos=north east,legend cell align={left}]
\addplot[color=color3, fill=color3] table [x=indexDataTest,y=countOutTest, col sep=comma] {numerics/dqnn_q_eq_cluster_epoch_200_vs.csv};
\end{axis}
\end{tikzpicture}
\caption{Two clusters trained with DQNN\textsubscript{Q}.} \label{fig:qgan_q_cluster}
\end{subfigure}
\begin{subfigure}{\textwidth}\centering
\begin{tikzpicture}[scale=1]
\begin{axis}[
ybar,
bar width=1.5pt,
xmin=0,   xmax=51,
ymin=0,   ymax=15,
width=.8\linewidth, 
height=.28\linewidth,
grid=major,
grid style={color0M},
xlabel= State index $x$, 
ylabel=Counts,legend pos=north east,legend cell align={left}]
\addplot[color=color3, fill=color3] table [x=index,y=countTTMean, col sep=comma] {numerics/QGAN_50data10sv_100statData_100statData_1-1networkGen_1-1networkDis_lda1_ep0i01_rounds200_roundsGen1_roundsDis1_conCvsLi_statMean.csv};
\end{axis}
\end{tikzpicture}
\caption{Two clusters plus $\frac{1}{\sqrt{2}}(\ket{0}+\ket{1})$ trained with DQNN.} \label{fig:GAN_Clus+Comp}
\end{subfigure}
\caption{\textbf{Diversity analysis of a DQGAN.} This plot describes the output's diversity of a \protect\oneoneone DQGAN (DQGAN\textsubscript{Q}) trained in 200 epochs with $\eta=1$ ($\eta_D=0.5,\eta_G=0.1$) and $\epsilon=0.01$ ($\epsilon = 0.25$) using $10$ quantum states of the data sets $\text{data}_\text{line}$ (a), $\text{data}_\text{cl}$ (b,c) and  $\text{data}_\text{cl+}$ (d) and compared the generator's output to the data set $\text{data}_\text{line}$.}
\label{fig:apnx_Div}
\end{figure}
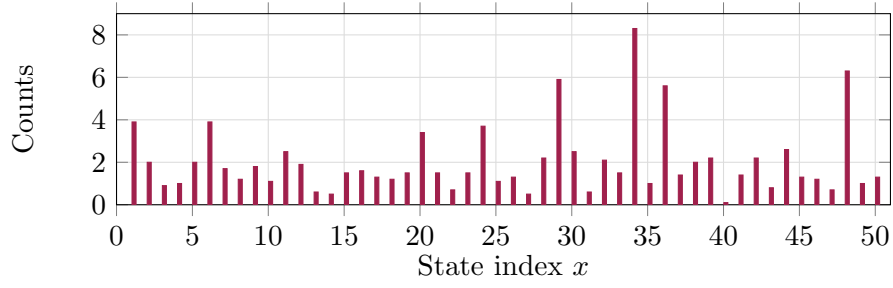
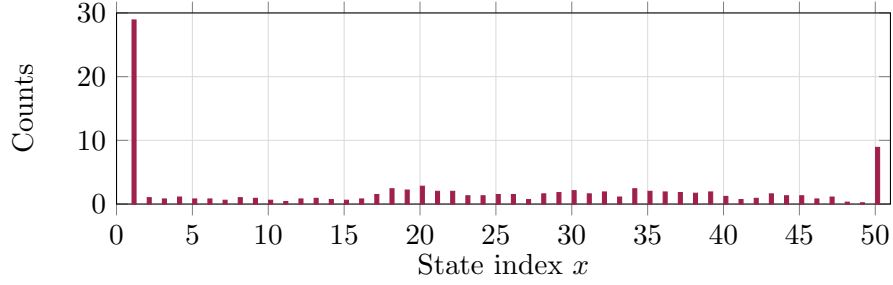
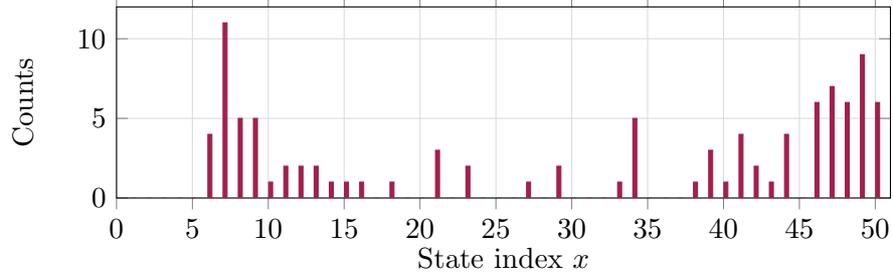
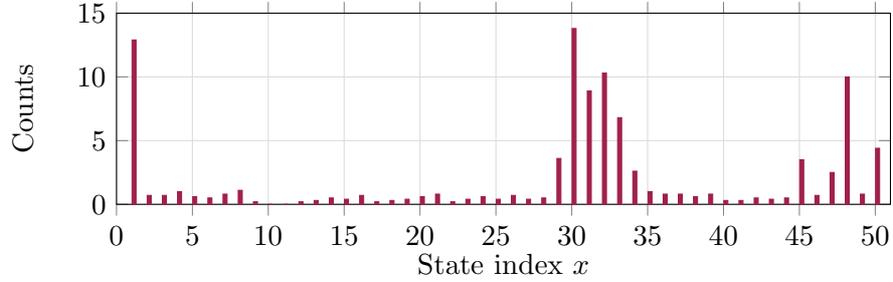

\cref{fig:GAN_ClusComp} depicts the distribution of the generator's output after 200 training epochs of ten training attempts with $S=10$ randomly chosen training states. The generator does not produce all elements in $\text{data}_\text{line}$ equally often. Due to the average of ten independent training attempts, the states $\ket{0}$ and $\ket{1}$ are very prominent in this plot. Since the state $\ket{0}$ is produced more often, we assume that the training states randomly chosen in every training attempt the $S=10$ training states were more often states of the first part of the cluster.

Further, by removing one state of the data set $\text{data}_\text{cl}$ and replacing it by $\frac{1}{\sqrt{2}}(\ket{0}+\ket{1})$ we obtain the data set $\text{data}_\text{cl+}$. \cref{fig:GAN_Clus+Comp} shows the diversity of a generator resulting by training a DQGAN with $\text{data}_\text{cl+}$. We can see, that some states in the middle of the $x$-range are generated more often compared to the plot in \cref{fig:GAN_ClusComp}. However, the generator does not produce the state $\frac{1}{\sqrt{2}}(\ket{0}+\ket{1})$ ($x=25$) very often and the resulting peak in the histogram is rather shifted more in the direction of the  $\ket{1}$ state ($x=50$).

Additionally, we trained a DQGAN\textsubscript{Q} using the clustered data set $\text{data}_\text{cl}$ and tested the generator's diversity after $r_T=200$ training epochs for a single execution on the $\text{data}_\text{line}$. The results are depicted in \cref{fig:qgan_q_cluster} which show the generator's ability to extend the clustered training data while keeping its main characteristics. However, as opposed to the DQGAN simulated on a classical computer, the DQGAN\textsubscript{Q} does not achieve to produce the full range of training data.

\end{document}